\setlist[enumerate]{nosep}
\setlist[itemize]{nosep}
\newcommand{\trace}[1]{\text{tr}\left(#1\right)}
\newtheorem{theorem}{Theorem}
\newcommand{\sccomment}[1]{#1}
\newcommand\blfootnote[1]{%
  \begingroup
  \renewcommand\thefootnote{}\footnote{#1}%
  \addtocounter{footnote}{-1}%
  \endgroup
}
\begin{document}

\title{Principal component analysis balancing prediction and approximation accuracy {for spatial data}}
\author{Si Cheng$^{1}$ \and 
Magali N. Blanco$^2$ \and
Timothy V. Larson$^{2,3}$
Lianne Sheppard$^{1,2}$ \and
Adam Szpiro$^{1\star}$ \and
Ali Shojaie$^{1\star}$}
\date{
$^1$Department of Biostatistics, University of Washington\\
$^2$Department of Environmental \& Occupational Health Sciences, University of Washington\\
$^3$Department of Civil \& Environmental Engineering, University of Washington
}

\maketitle
\blfootnote{$^\star$ indicates joint senior authors}
\blfootnote{[funding information]. All authors declare no conflicts of interest.}

\begin{abstract}

{Dimension reduction is often the first step in statistical modeling or prediction of multivariate spatial data.} However, most existing dimension reduction techniques do not account for the spatial correlation {between observations and do not take the downstream} modeling task into consideration when finding the lower-dimensional representation. We formalize the closeness of {approximation to the original data and the utility of lower-dimensional scores} for downstream modeling as two {complementary, sometimes conflicting, metrics} for dimension reduction. We illustrate how existing methodologies fall into this framework and propose a flexible dimension reduction algorithm that achieves the optimal trade-off. We derive a computationally simple form for our algorithm and illustrate its performance through simulation studies, as well as two applications in air pollution modeling and spatial transcriptomics.

\emph{Keywords}: dimension reduction, principal component analysis, spatial prediction, air pollution, spatial transcriptomics
\end{abstract}

\section{Introduction}
\label{sec:intro}

Statistical modeling of multivariate data is a common task in many areas of research. Environmental epidemiologists often seek to learn the relationship between some health outcome and a complex mixture of multiple air pollutants, where the health effects of such mixture could vary depending on its chemical composition {rather} than the overall quantity alone \citep{goldberg2007interpretation, lippmann2013national,  dai2014associations, achilleos2017acute}. 
Spatial transcriptomics is another area involving the analysis of multivariate, and most often, high-dimensional measurements, where researchers analyze gene expressions on tissues with spatial localization information \citep{hu2021spagcn, zhao2021spatial}.

Despite the different domain areas, there are several common challenges for such tasks: First, the (potential) high dimensionality and strong autocorrelation of the multivariate measurements would require additional {exploratory data analysis}, 
such as dimension reduction, as a preliminary step; see e.g., \citet{dominici2003health} for epidemiological and \citet{kiselev2017sc3, sun2019accuracy} for genomics studies. 
The spatial characteristics of the measurements should be accounted for in {such}  dimension reduction. 

The second complexity is that dimension reduction is often executed independently from the subsequent modeling steps, which would cause the {dimension-reduced data} to be uninterpretable, or of sub-optimal {utility}, for downstream analyses. 
One example is the study of health effects of air pollution, where there is often spatial misalignment between the air pollution monitoring sites and locations where health outcomes are available \citep{ozkaynak2013air, bergen2013national, chan2015long}. Therefore, after lower-dimensional {scores} are obtained, they still need to be extrapolated to the latter locations; but these {lower-dimensional scores may be noisy and not spatially predictable.} 
This problem is also present in spatial gene expression analyses, where the reduced-dimensional gene expression matrix may not preserve all biologically meaningful patterns. When {the processed gene expression data} further undergo spatial imputation due to incomplete profiling \citep{pierson2015zifa, prabhakaran2016dirichlet, li2021imputation}, or are clustered into spatial domains based on spatial and other biological information (commonly termed \emph{spatial domain detection} in spatial transcriptomics) \citep{zhao2021spatial, long2023spatially}, the statistical performance may be affected by loss of information that seems unimportant in dimension reduction, but is significant for downstream modeling \citep{liu2022joint}.


Principal component analysis (PCA) is a classical dimension-reduction technique. {It leads} to independent {and} lower-dimensional scores, called principal component scores, or PC scores, {that approximate} the multivariate measurements \citep{jolliffe2002principal}. Existing methodologies that tackle some of the aforementioned challenges are often based on extensions of PCA.
\citet{jandarov2017novel} proposed a spatially predictive PCA algorithm, where the PC scores were constrained to be linear combinations of covariates and spatial basis, so they can be more accurately predicted at new locations. \citet{bose2018adaptive} extended this predictive PCA approach and enabled adaptive selection of covariates to be included for each PC. \citet{vu2020probabilistic, vu2022spatial} proposed a probabilistic version of predictive PCA along with a low-rank matrix completion algorithm, which {offers} improved performance when there is spatially informative missing data. \citet{keller2017covariate} developed a predictive $k$-means algorithm where dimension reduction was conducted by clustering.  

In spatial transcriptomics applications such as domain detection (see detailed discussion in Section~\ref{subsec:st}), \citet{zhao2021spatial} processed gene expression information {using} PCA and conducted downstream clustering via a Bayesian approach, where the spatial arrangement of the measured spots were modeled using the PC scores. \citet{shang2022spatially} proposed a probabilistic PCA algorithm, where spatial information was incorporated into a latent factor model for gene expression. \citet{liu2022joint} developed a joint approach that simultaneously {performs} dimension reduction and clustering, and uses a latent hidden Markov random field model to enforce spatial structures of the clusters.

When dimension reduction is used as an {initial} step before downstream analyses such as prediction, statistical inference and clustering, there are two key considerations for the quality of dimension reduction. The first is \textit{representability}, which is how well the lower dimensional components approximate the original measurements. {The second,} often conflicting, goal is  \textit{predictability} of the resulting components, so that they preserve meaningful scientific and spatial information {when extrapolated to locations without measurements}; {such predictability ensures that the extrapolations} are of high quality for subsequent modeling. {Even in analytical tasks that do not primarily focus on prediction (e.g., domain detection in spatial transcriptomics), the predictability of the PC scores is still desirable and implicitly considered, since it enforces the interpretability and spatial structure of the PCs; see Section~\ref{subsec:st} for further illustration.}

Among existing methodologies, {classical} PCA solely focuses on representability, while predictive PCA \citep{jandarov2017novel} and its variants \citep{bose2018adaptive} prioritize predictability and optimize representability only after the former is guaranteed. 
Probabilistic approaches, such as \citet{shang2022spatially}, \citet{liu2022joint} and \citet{vu2020probabilistic}, implicitly {incorporate} both tasks into the latent factor models, though their performance depends on the validity of the parametric model assumptions, and there is no explicit interpretation or optimality guarantee on either property or the trade-off between them. 

In this work, we propose a flexible dimension reduction algorithm, termed \emph{representative and predictive PCA} (RapPCA), that explicitly minimizes a combination of representation and prediction errors and finds the optimal balance between them. We further allow the underlying lower-dimensional scores to have complex spatial structure and non-linear relationships with {external} covariates (if any). We show that the optimization problem involved can be solved by eigendecomposition {of} transformed data, enabling simple and efficient computation.

We start in Section~\ref{sec:setting} by briefly reviewing related methods, introducing notations, and defining {various performance metrics of dimension reduction.} 
Section~\ref{sec:method} describes our proposed approach, and establishes the optimality of the proposed solution. We {compare} the performance of our method with existing variants of PCA via simulation studies in Section~\ref{sec:sims}, and demonstrate its application in epidemiological exposure assessment as well as spatial transcriptomics in Section~\ref{sec:data}. Section~\ref{sec:discussion} concludes the manuscript with a discussion of our methodology and related alternatives.





\section{Setting}
\label{sec:setting}

\subsection{{Classical} and Predictive PCA}
\label{subsec:review_pca}

{Let} $Y\in \mathbb R^{n\times p}$  represent $p$-dimensional spatially-structured variables of interest measured at $n$ locations. Examples include the concentrations of a mixture of $p$ pollutants at $n$ monitoring sites, or the expression of $p$ genes at $n$ spots on tissues. In addition, we observe the spatial coordinates $\{(s_{i1}, s_{i2})\}_{i=1}^n$ of the $n$ locations, along with (potential) observations of $d$ covariates, $X\in\mathbb R^{n\times d}$. These could be, for example, population density and land use information for air pollution studies, or histology information from images for spatial transcriptomics. 

{The} ultimate goal is to conduct prediction, clustering or other modeling tasks on $Y$ based on the spatial coordinates and covariates. Since the columns of $Y$ may be strongly correlated and/or noisy \citep{rao2021exploring}, dimension reduction {can be used to} to extract scientifically meaningful signals from the original data.
The {classical} 
PCA achieves this by maximizing the proportion of variability in $Y$ that is explained by the lower dimensional PCs.
We briefly introduce the {classical} PCA algorithm under the 
{formulation} of \citet{shen2008sparse} to highlight its connection with related techniques.

Suppose the columns of $Y$ are centered and scaled. To find an $r$-dimensional representation of $Y$, {classical} PCA can be expressed as a sequence of rank-1 optimization problems for $l = 1, \ldots, r$ {\citep{shen2008sparse}}:
\begin{equation}
    \min_{u_l,v_{l}}\ \left\lVert Y^{(l)} - u_l v_l^\top\right\rVert_F^2 \quad 
    \text{s.t. } \lVert v_l\rVert_2 = 1,
    \label{equ:pca}
\end{equation}
where the 1-dimensional PC score $u_l$ is an $n$-vector, the loading $v_l$ is $p \times 1$, and $\lVert\cdot\rVert_F$ and $\lVert \cdot\rVert_2$ represent Frobenius norm for matrices and $\ell_2$ norm for vectors, respectively. $Y^{(l)}$ is the residual from approximation after each iteration: denoting $\tilde u_l, \tilde v_l$ as the solution to (\ref{equ:pca}), we define $Y^{(l)} = Y^{(l-1)} - \tilde u_{l-1}\tilde v_{l-1}^\top$ for $2\le l\le r$ and $Y^{(1)} = Y$. 

{While} each of the $u_l$'s explain the greatest variability in $Y$, there is no guarantee that they are scientifically {interpretable, and they do not explicitly} account for the spatial structures underlying 
$Y$. 
{Recognizing this, p}redictive PCA \citep{jandarov2017novel} builds upon the expression in (\ref{equ:pca}), 
{but constrains each PC score $u_l$ to be in a model space of the spatial coordinates and covariates $X$, leading to the optimiztion problem}, 
\begin{equation}
    \min_{\alpha_l,v_l}\ \left\lVert Y^{(l)} - \left(\frac{Z\alpha_l}{\lVert Z\alpha_l\rVert_2}\right)v_l^\top \right\rVert_F^2
    \label{equ:predpca}
\end{equation}
for each $l = 1,\ldots, r$. Here, $Z = [X \ B]$, $B\in \mathbb R^{n\times m}$ contains $m$ thin-plate regression spline \sccomment{\citep[TPRS; ][]{wood2003thin}} basis functions capturing the spatial effects, {and in contrast to (\ref{equ:pca}), normalization is done on the term $Z{\alpha_l}$ instead of $v_l$}. 
{As before, $Y^{(l)}$ is the residual after approximation, i.e. $Y^{(l)} = Y^{(l-1)} - Y^{(l-1)}\frac{\tilde v_{l-1} \tilde v_{l-1}^\top}{\lVert \tilde v_{l-1}\rVert_2}$ for $2\le l\le r$ and $Y^{(1)} = Y$. {Note that the quantity $Y^{(l-1)}\tilde v_{l-1}$ {corresponds to} $\tilde u_{l-1}$ in the expression of PCA.}}
Compared {with} (\ref{equ:pca}), this formulation restricts each PC score {$u_l = \frac{Z\alpha_l}{\lVert Z\alpha_l\rVert_2}$} to fall within 
the space spanned by the columns of $Z$. {This {ensures} the spatial smoothness and the inclusion of information from covariates $X$ in the PC score $u_l$, and thus enforces predictability.}

After solving either (\ref{equ:pca}) or (\ref{equ:predpca}) for all $r$ PCs, the loadings are $\ell_2$-normalized ($\tilde v_r \leftarrow \tilde v_r/\lVert \tilde v_r\rVert_2$) if not already done
, and then concatenated as $\tilde V_{p\times r}:= [\tilde v_1 \ \ldots \ \tilde v_r]$. The PC scores are then defined as $\tilde U_{n\times r}:= Y\tilde V$.

As noted earlier, {classical} PCA aims to achieve optimal representability of the PC scores $\tilde u$, but these $\tilde u$'s may not retain {meaningful signals in $X$ or important spatial patterns to be well predictable.} In contrast, predictive PCA and {its variants} \citep{jandarov2017novel, bose2018adaptive} minimize the {approximation gap}, {which will be formally defined in Section~\ref{subsec:metrics}},  after constraining each $\tilde u$ to lie within a specific model space; but when these PC scores $\tilde u$ are predicted at new locations and transformed back to the higher dimensional space of $Y$, closeness in the space of $\tilde u$ may not necessarily translate to the original space of observations $Y$ due to reduced approximation accuracy. Our proposed method, {introduced in Section~\ref{sec:method}}, is based on an interpolation between {classical} and predictive PCA, and encourages $\tilde u$ to be close to, but not exactly within, the specified model space while balancing the quality of representation. 

\subsection{Evaluation Metrics for Dimension Reduction}
\label{subsec:metrics}

{Before introducing our proposal, it is helpful to formalize different aspects of dimension reduction performances, such as representability and predictability, {and to} characterize the overall balance between them. 

Suppose the PC scores $\tilde U_{\text{trn}} = [\tilde u_1 \ \ldots \tilde u_r]$ and loadings $\tilde V$ are obtained on a set of training data {with size $n_{\text{trn}}$}, and $\tilde u_1, \ldots,\tilde u_r$ are then predicted at {$n_{\text{tst}}$} test locations with a {user-specified} 
spatial prediction model as $\hat U_{\text{tst}}:=[\hat u_1 \ \ldots \ \hat u_r]$. We define the \textit{mean-squared prediction error} (MSPE) of this procedure as
\[
    \text{MSPE} = {n_{\text{tst}}}^{-1}\left\lVert (\hat U_{\text{tst}} - U^*) \tilde V^\top \right\rVert_F^2,
\]
where $U^* = \arg\min_U \lVert Y_{\text{tst}} - U\tilde V^\top\rVert_F^2 = Y_{\text{tst}}\tilde V$. {Thus, $U^*$} is what the actual PC scores on the test set would be according to the loadings $\tilde V$, if $Y_{\text{tst}}$ were known. Consequently, MSPE characterizes the gap between the predicted and true PC scores and reflects the predictability of the PCs.

The \textit{mean-squared representation error} (MSRE) is defined as
\[
    \text{MSRE} = {n_{\text{tst}}}^{-1}\left\lVert Y_{\text{tst}} - U^*\tilde V^\top \right\rVert_F^2,
\]
which measures the gap {between the true data, $Y_{\text{tst}}$, and} the closest approximation possible based on $\tilde V$. Since the quality of representation alone, without considering predictive performance, is of more interest for the training than the test data, we instead examine MSRE-trn as the metric for representation: $\text{MSRE-trn} = {n_{\text{trn}}}^{-1}\lVert Y_{\text{trn}} - \tilde U_{\text{trn}}\tilde V^\top\rVert_F^2$. It can be seen that MSRE-trn exactly matches the objective function for PCA in (\ref{equ:pca}).

{Given any outcome data $Y$, loadings $\tilde V$ from a dimension reduction procedure and predicted PC scores $\hat U$, we define the \textit{total mean-squared error} (TMSE) resulting from both dimension reduction and prediction as $\text{TMSE}= n^{-1}\left\lVert Y - \hat U \tilde V^\top \right\rVert_F^2$}. This quantity measures the discrepancy between the true data $Y$ and the predicted scores that are transformed back to the original, higher-dimensional space.
In the specific training-test setting, the expression of TMSE becomes
\[
    \text{TMSE} = {n_{\text{tst}}}^{-1}\left\lVert Y_{\text{tst}} - \hat U_{\text{tst}}\tilde V^\top \right\rVert_F^2.
\]
MSPE and MSRE {can be viewed informally as} a decomposition of the overall TMSE, {where the prediction and approximation gaps, i.e., $(U^* - \hat U_{\text{tst}}) \tilde V^\top$ and $Y_{\text{tst}} - U^*\tilde V^\top $, {are components of} the overall error $Y_{\text{tst}} - \hat U_{\text{tst}}\tilde V^\top$.} 



\section{\sccomment{The Proposed Method}}
\label{sec:method}

\subsection{\sccomment{Representative and Predictive PCA (RapPCA)}}
\label{subsec:method-1}

Predictive PCA restricts the PC scores $\tilde u$ within some model space --- specifically, $\text{span}(Z)$ --- to enforce predictability while trading off representability to some extent. A natural idea, with more flexibility, would be relaxing this constraint, and instead {using a penalty term to force $\tilde u$ to be close} to, but not exactly within, the model space. By choosing the magnitude of {such a penalty term} data-adaptively, 
we aim to achieve the optimal balance between representation and prediction leading to the smallest overall error. 

{Noting} that the model space can be generalized from the linear span in (\ref{equ:predpca}) to incorporate more flexible mechanisms underlying each PC $\tilde u$, we adopt kernel methods \citep{evgeniou2000regularization, hastie2009elements} to capture non-linear covariate effects; {additionally}, we use smoothing splines \citep{wahba1990spline, wood2003thin} {rather than unpenalized regression splines} to enforce spatial structures. 

{More conceretely}, we propose to solve the following sequence of optimization problems, for $l = 1,\ldots,r$, to extract $r$ PCs from $Y$:
\begin{align}
    \notag &\sccomment{\min_{v,\alpha,\beta}\ f_{\gamma,\lambda_1,\lambda_2}(v,\alpha,\beta):= \left\lVert Y^{(l)}-Y^{(l)}vv^\top \right\rVert_F^2 +
    \gamma\left\lVert Y^{(l)}v-(K\alpha+B\beta) \right\rVert_2^2}
    +\lambda_1 \alpha^\top K\alpha + \lambda_2 \beta^\top Q\beta \\
    &\text{s.t. } 
    v^\top v = 1.
    \label{equ:rappca}
\end{align}

As in Section~\ref{sec:setting}, $Y^{(l)}$ {denotes} 
the residual $Y^{(l-1)}-\tilde u_{l-1}{\tilde v}_{l-1}^\top$ for $2\le l\le r$, \sccomment{where $\tilde u_{l-1} = Y^{(l-1)}\tilde v_{(l-1)}$}, and \sccomment{$Y^{(l)}$} {is} set to {the original data} $Y$ when $l = 1$.
Moreover, $K{\in\mathbb R^{n\times n}}$ is a kernel matrix such that $K_{ij} = k(X_{i\cdot}, X_{j\cdot})$ for some kernel function $k(\cdot, \cdot)$, with $X_{i\cdot}$ being the $i$th observation (row) of $X$. 
{The kernel function $k(\cdot,\cdot)$ maps the covariates $X$ into a higher-dimensional space, and the kernel matrix $K$ encodes the pairwise similarities between each entry $X_{i\cdot}$ in this transformed space. Linear operations in the high-dimensional space effectively captures non-linear relationships in the original input space of $X$. This enables more flexibility than most existing predictive variants of PCA which specify linear relationships \citep[e.g.,][]{jandarov2017novel, bose2018adaptive, vu2020probabilistic, shang2022spatially}. Common choices of kernel functions include linear kernel, $k(x,x')=x^\top x'$ \sccomment{and quadratic kernel $k(x,x')=(1+x^\top x')^2$. Other choices, such as Gaussian kernel, $k_h(x,x')=\exp\left(-h\lVert x-x' \rVert^2\right)$, or higher-order polynomial kernel $k_h(x, x') = (1+x^\top x')^h$ may include an additional tuning parameter, $h$} \citep{hofmann2008kernel, murphy2012machine}.}
The columns of $B{\in\mathbb R^{n\times m}}$ are evaluations of $m$ spline basis functions, {e.g., thin-plate regression splines as in predictive PCA,} at the spatial coordinates $\{(s_{i1}, s_{i2})\}_{i=1}^n$, and $Q{\in\mathbb R^{m\times m}}$ is the penalty matrix {induced by the choice of smoothing splines. For example, the penalty $Q$ corresponding to thin-plate regression splines is the wiggliness
penalty defined in \citet{wood2003thin}; for P-splines, the penalty term depends on the differences between consecutive entries of $\beta$, i.e., $\sum_k (\beta_k - \beta_{k-1})^2$} \citep{wood2017generalized}.

\sccomment{Problem~(\ref{equ:rappca}) is equivalent to optimizing 
$$
\tilde f_{\gamma,\lambda_1,\lambda_2}(u,v,\alpha,\beta):= \left\lVert Y^{(l)}-uv^\top \right\rVert_F^2 + \gamma\left\lVert u-(K\alpha+B\beta) \right\rVert_2^2  +\lambda_1 \alpha^\top K\alpha + \lambda_2 \beta^\top Q\beta,$$ 
subject to an additional constraint $u = Y^{(l)}v$. Its} first term $\left\lVert Y^{(l)}-uv^\top \right\rVert_F^2$ measures the approximation gap of the score $u$ and loading $v$, similar to the objective of minimization for {classical} PCA. The second term reflects the distance between the score $u$ and the model space specified by $K$ and $B$, and encourages $u$ to be predictable based on the covariates and spatial coordinates. 
The \sccomment{implicit} constraint $u = Y^{(l)}v$ ensures a similar relationship between the PC scores and loadings as in classical PCA. 

The computation of the optimization problem in (\ref{equ:rappca}) could suffer when $K$ or $Q$ is near-singular. Therefore, in practice, we replace $K$ and $Q$ \sccomment{in the penalty term} with $\tilde K := K+\delta I_n$ and $\tilde Q:= Q+\delta I_m$ for a small, data-independent constant $\delta$, and identity matrices $I_n$ and $I_m$, to avoid near-singularity; {see also \citet{li2019prediction} for similar handling}. 

\sccomment{Among the tuning parameters $\gamma$, $\lambda_1$ and $\lambda_2$}, 
$\gamma$ is of most interest as it controls the trade-off between predictability and representability, {with} larger $\gamma$ imposing higher penalty on unpredictable scores {$u= Y^{(l)}v$}. The third and fourth terms in (\ref{equ:rappca}) are $\ell_2$ penalties {that enforce the identifiability of the model parameters and help to avoid overfitting the data}. 


Though optimizing $v,\alpha,\beta$ iteratively via coordinate descent is straightforward {to implement}, the optimization problem (\ref{equ:rappca}) is {non-convex due to the joint optimization of $v$ and $(\alpha,\beta)$, as well as the non-convex feasible set of $v$. Consequently, iterative algorithms, including coordinate descent are not guaranteed to converge to the global minimizer.}
Instead, we propose an analytical solution to (\ref{equ:rappca}) that {attains the global miminum despite} the non-convexity of the objective function $f_{\gamma,\lambda_1,\lambda_2}(v,\alpha,\beta)$ {and/or the constraints}. 
We describe {our proposed procedure in Algorithm~\ref{algo:rappca}, and} refer to the proposed method as \textit{representative and predictive PCA}, or RapPCA. {Theorem~\ref{thm:optim}, proved in Appendix~\ref{app:proof}, establishes the optimality of $\tilde U$ and $\tilde V$}. {Numerical examples in Appendix~\ref{app:optimality} empirically verify the optimality}.

\RestyleAlgo{boxruled}
\begin{algorithm}[t]
  \caption{Representative and Predictive PCA (RapPCA) \label{algo:rappca}}
  \KwData{Outcome $Y^{n\times p}$; spatial coordinates $\{(s_{i1}, s_{i2})\}$; covariates $X^{n\times d}$ (optional)}
  \KwIn{Number of PCs to extract $r > 0$; hyper-parameters $\gamma^{(l)}$, $\lambda_1^{(l)}$, $\lambda_2^{(l)}$ specified for the $l$th PC ($l = 1,\ldots, r$); kernel function $k(\cdot, \cdot)$; spatial spline basis evaluated at $\{(s_{i1}, s_{i2})\}$: $B^{n\times m}$; penalty matrix $Q_{m\times m}$; $\delta > 0$}
  
  Calculate the kernel matrix $K$: $K_{ij}\leftarrow k(X_{i\cdot}, X_{j\cdot})$\;

  Add a small diagonal term to avoid near-singularity: $\tilde K \leftarrow K + \delta I_{n}$, $\tilde Q\leftarrow Q + \delta I_m$\;
  
  Initialize $Y^{(0)} = Y$, $\tilde u_0 = 0$, $\tilde v_0 = 0$\;
        
    \For{$l = 1, \ldots, r$}{
        $Y^{(l)} \leftarrow Y^{(l-1)} - \tilde u_{l-1}\tilde v_{l-1}^\top $\;

        Calculate the singular value decomposition of $Y^{(l)}$: $Y^{(l)} = \mathcal S^{(l)}{\mathcal D^{(l)}} {\mathcal T^{(l)}}^\top$\;

        Concatenate the columns of $K$ and $B$: $Z \leftarrow \left[K \quad \sqrt{\frac{\lambda_2^{(l)}}{\lambda_1^{(l)}}}B\right] $\;

        Concatenate the penalty matrices: $P \leftarrow \begin{bmatrix}
        \tilde K & 0 \\ 0 & \frac{\lambda_2^{(l)}}{\lambda_1^{(l)}}\tilde Q
        \end{bmatrix}$\;

        Set $A\leftarrow -(\gamma^{(l)}-1){\mathcal D^{(l)}}^2 + {\gamma^{(l)}}^2{\mathcal D^{(l)}}{\mathcal S^{(l)}}^\top Z(\gamma^{(l)} Z^\top Z + \lambda_1^{(l)} P)^{-1}Z^\top \mathcal S^{(l)}{\mathcal D^{(l)}}$\;
        
        Calculate the loading: $\tilde v_l \leftarrow {\mathcal T^{(l)}}^\top q$, where $q$ is the normalized first eigenvector of $A$\;

        Calculate the $l$th PC score: $\tilde u_l \leftarrow Y^{(l)}\tilde v_{l}$\;
    }

    Concatenate the PC scores and loadings: $\tilde U \leftarrow [\tilde u_1, \ldots, \tilde u_r]$, $\tilde V \leftarrow [\tilde v_1, \ldots, \tilde v_r]$\;
    
    \KwOut{PC scores $\tilde U$ and loadings $\tilde V$}
\end{algorithm}

\begin{theorem}
    For each PC (indexed by $l$), denote the singular value decomposition of $Y^{(l)}$ as $Y^{(l)} = \mathcal S^{(l)}{\mathcal D^{(l)}} {\mathcal T^{(l)}}^\top$. Take $Z,P,q$ and {the loading} $\tilde v_l$ to be the specified forms in Algorithm~\ref{algo:rappca}. Then $(\tilde v_l, \tilde\alpha_l,\tilde\beta_l)$ is the global minimizer of (\ref{equ:rappca}), where $\left[\tilde\alpha_l^\top \quad \sqrt{\frac{\lambda_2}{\lambda_1}}\tilde\beta_l^\top\right]^\top = (\gamma Z^\top Z + \lambda_1 P)^{-1}(\gamma Z^\top \mathcal S^{(l)}{\mathcal D^{(l)}}q)$.
\label{thm:optim}
\end{theorem}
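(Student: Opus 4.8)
The plan is to convert the non-convex joint problem in $(v,\alpha,\beta)$ into a two-stage optimization whose inner stage is an exactly solvable ridge regression and whose outer stage is a Rayleigh-quotient maximization on the sphere. Writing $Y := Y^{(l)}$ throughout, I would first simplify the representation term: expanding $\lVert Y - Yvv^\top\rVert_F^2$ and invoking the constraint $v^\top v = 1$ collapses the rank-one cross terms, giving $\lVert Y - Yvv^\top\rVert_F^2 = \lVert Y\rVert_F^2 - v^\top Y^\top Yv$. Hence, up to the additive constant $\lVert Y\rVert_F^2$, this term contributes only $-v^\top Y^\top Yv$.

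Next I would profile out $(\alpha,\beta)$ for fixed $v$. Stacking the two coefficient blocks into the single vector $\theta$ with the rescaling indicated in Theorem~\ref{thm:optim}, and using the concatenated design $Z$ and penalty $P$ from Algorithm~\ref{algo:rappca}, the $(\alpha,\beta)$-dependent part of $f_{\gamma,\lambda_1,\lambda_2}$ takes the ridge form $\gamma\lVert Yv - Z\theta\rVert_2^2 + \lambda_1\theta^\top P\theta$. This is a strictly convex quadratic in $\theta$, since $\gamma Z^\top Z + \lambda_1 P \succ 0$ --- strict positive definiteness being exactly what the $\delta$-perturbation $\tilde K = K + \delta I_n$, $\tilde Q = Q + \delta I_m$ guarantees. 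Setting its gradient to zero yields the unique minimizer $\hat\theta(v) = (\gamma Z^\top Z + \lambda_1 P)^{-1}\gamma Z^\top Yv$, which is the closed form asserted in the theorem once evaluated at the optimal $v$. Substituting $\hat\theta(v)$ back via the standard value of a minimized quadratic leaves the concentrated objective $\lVert Y\rVert_F^2 + (\gamma-1)v^\top Y^\top Yv - \gamma^2 v^\top Y^\top Z(\gamma Z^\top Z + \lambda_1 P)^{-1}Z^\top Yv$, now a quadratic form in $v$ alone.

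I would then diagonalize using the SVD $Y = \mathcal{S}^{(l)}\mathcal{D}^{(l)}{\mathcal{T}^{(l)}}^\top$ and change variables $v = \mathcal{T}^{(l)}q$; because $\mathcal{T}^{(l)}$ is orthogonal this is a norm-preserving bijection of the unit sphere, and it sends $Y^\top Y$ to ${\mathcal{D}^{(l)}}^2$ and $Z^\top Yv$ to $Z^\top\mathcal{S}^{(l)}\mathcal{D}^{(l)}q$ (so $\hat\theta$ at the optimum indeed reads $(\gamma Z^\top Z+\lambda_1 P)^{-1}\gamma Z^\top\mathcal{S}^{(l)}\mathcal{D}^{(l)}q$). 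The concentrated objective becomes $\lVert Y\rVert_F^2 - q^\top A q$ with $A$ exactly the symmetric matrix defined in Algorithm~\ref{algo:rappca}. Minimizing the original objective over $\{v: v^\top v = 1\}$ is therefore equivalent to maximizing the Rayleigh quotient $q^\top A q$ over $\{q: q^\top q = 1\}$, whose global maximum is attained at the leading normalized eigenvector of $A$; mapping this $q$ back through $\mathcal{T}^{(l)}$ recovers $\tilde v_l$, and $\hat\theta(\tilde v_l)$ recovers $(\tilde\alpha_l,\tilde\beta_l)$.

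The step I expect to be the crux is the global-optimality claim in the presence of non-convexity, which I would phrase carefully via the identity $\min_{v,\alpha,\beta}f = \min_{v}\min_{\alpha,\beta}f$. Because the inner minimization is strictly convex and solved in closed form for every admissible $v$, the profiled function is exact rather than merely an upper bound, and the outer minimization over the sphere is itself solved globally by the variational characterization of the largest eigenvalue. The only remaining technical checks are invertibility of $\gamma Z^\top Z + \lambda_1 P$ (from the $\delta$-regularization) and symmetry of $A$, the latter being immediate since $(\gamma Z^\top Z + \lambda_1 P)^{-1}$ is symmetric and enters $A$ only through the symmetric sandwich $\mathcal{D}^{(l)}{\mathcal{S}^{(l)}}^\top Z(\cdot)^{-1}Z^\top\mathcal{S}^{(l)}\mathcal{D}^{(l)}$; together these ensure the top eigenvector of $A$ genuinely maximizes the quotient and hence delivers the global minimizer of (\ref{equ:rappca}).
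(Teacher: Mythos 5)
Your proposal is correct and follows essentially the same route as the paper's proof: collapse the representation term to $-v^\top Y^\top Yv$ using $v^\top v=1$, profile out the stacked coefficient vector via the closed-form ridge minimizer, rotate $v=\mathcal{T}^{(l)}q$ through the SVD, and reduce to a Rayleigh-quotient maximization solved by the leading eigenvector of $A$. The only (immaterial) difference is that you profile out $(\alpha,\beta)$ before changing variables to $q$ whereas the paper substitutes the SVD first; your explicit appeal to $\min_{v,\alpha,\beta}f=\min_v\min_{\alpha,\beta}f$ and to strict positive definiteness of $\gamma Z^\top Z+\lambda_1 P$ from the $\delta$-perturbation is, if anything, slightly more careful than the paper's own writeup.
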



\subsection{\sccomment{Computational Considerations}}
\label{subsec:method-2}

\sccomment{When $m$ spline basis functions are included in $B$ to represent a $p$-dimensional $Y$ observed over $n$ samples, the computational complexity of the procedure described in Algorithm~\ref{algo:rappca} is $O\left((n+m)^3 + np^2\right)$.} 
This algorithm is computationally simple in that it obtains the optimal solution with an explicit expression, as opposed to iterative numerical optimization procedures.

The proposed algorithm depends on the specification of several quantities. For the hyper-parameters $\gamma^{(l)},\lambda_1^{(l)},\lambda_2^{(l)}$, one could adopt cross-validation and choose the combination that minimizes TMSE or other metrics (e.g., MSPE/MSRE) depending on the targeted use case, sequentially for each PC.
Taking TMSE as an example, for the $l$th PC and for each combination of $\gamma^{(l)},\lambda_1^{(l)},\lambda_2^{(l)}$, we calculate $\tilde u_l, \tilde v_l$ on the training set, predict $\hat u_{\text{tst}}$ on the validation set, and then examine $\text{TMSE}_l = n_{\text{tst}}^{-1}\left\lVert Y^{(l)}_{\text{tst}} - \hat u_{\text{tst}}\tilde v_l^\top \right\rVert_2^2$ to choose the optimal combination minimizing $\text{TMSE}_l$.
\sccomment{When necessary, the additional tuning parameter for kernel function, such as $h$ for the Gaussian kernel, can be chosen together with $\gamma, \lambda_1,\lambda_2$ using cross-validation.}
Specification of the dimension $m$ of $B$ is not critical 
due to the penalty on $\beta$ \citep{wood2017generalized}. One can start with a large $m$ (e.g., close to the sample size $n$) and the degree of freedom of $\beta$ will be controlled data-adaptively via $\lambda_2$.

The \sccomment{implementation of} Algorithm~\ref{algo:rappca} \sccomment{also} depends on the number $r$ of PCs. Since increasing $r$ will reduce the total representation error and increase the total prediction error, the selection can be done by examining the trajectories of MSPE, or cumulative sum of PC-wise prediction errors, along with MSRE with respect to $r$, and identifying the `elbow' of these curves. This reflects the threshold beyond which adding more PCs would not alter either the prediction or representation meaningfully. \sccomment{We will present an illustrative example (Figure~\ref{fig:elbow_curves}) for this procedure in Section~\ref{subsec:st}, which involves a high-dimensional breast tumor dataset.}

\section{Simulations}
\label{sec:sims}

In this section, we {compare} the performance of RapPCA {on} simulated data {with} {classical} and predictive PCA. In particular, we investigate three scenarios with different trade-offs between predictability and representability resulting from different data generating mechanisms. 

In all settings, we randomly generate $n = 200$ locations $\{(s_{i1}, s_{i2})\}_{i=1}^n$ within $[0,1] \times [0,1]$. For each location, we generate $d = 10$ covariates $x_{ij}$ ($i = 1,\ldots, n$ and $j = 1, \ldots, d$) from independent Uniform$[-1,1]$ distributions and calculate 6 independently distributed PCs as
\begin{equation}
    \text{PC}^{(l)} = f_l(X) + \epsilon_l,
    \  l = 1,\ldots, 6,
    \label{equ:sim-pc-defn}
\end{equation}
where $\epsilon_l\sim \text{Normal}(0, \sigma_l^2\Sigma)$ with $\sigma_l^2$ controlling the signal-to-noise ratio and hence the predictability of each PC; $f_l(\cdot)$ is a specified mean function, and the covariance matrix $\Sigma$ has an exponential structure with $\Sigma_{ii'} = 0.5\exp\left(-\frac{(s_{i1} - s_{i'1})^2 + (s_{i2} - s_{i'2})^2}{0.5}\right) + 0.5$. The outcome $Y$ represents concentrations of $p=15$ pollutants, and is given by
\begin{equation*}
    Y_{n\times p} = \text{PC}_{n\times 6} M_{6\times p} + \varepsilon_{n\times p} \, , 
\end{equation*}
where $M$ is adjusted differently in each scenario to control the contribution from different PCs, 
and the entries of the noise terms $\varepsilon$ are drawn from independent Normal$(0,0.1)$ distributions. 

We examine metrics including TMSE, MSPE and MSRE-trn for 100 replicates of data, in terms of their overall magnitudes, as well as the breakdown by each PC and/or $\gamma$ (the tuning parameter). 
We also report the individual prediction MSEs for each PC, which is $\text{MSE}_l = n_{\text{tst}}^{-1}\lVert \hat u_l - u^*_l \rVert_2^2$, recalling from Section~\ref{subsec:metrics} that $u^*_l = Y_{\text{tst}}\tilde v_l$. We extract $r=3$ PCs sequentially with each of the three candidate algorithms, PCA, predictive PCA and RapPCA. The optimal {combination of} $\gamma$, {$\lambda_1$ and $\lambda_2$} is selected by minimizing TMSE via {10-fold} cross-validation \sccomment{on a grid where $\gamma, \lambda_1$ and the ratio $\lambda_2/\lambda_1$ each independently varies from 0.05 to 5 with varying spacings\footnote{\sccomment{We use 0.05 for hyperparameter values up to 0.1; 0.1 for hyperparameter values between 0.1 and 1; 1 for hyperparameter values over 1.}}, plus the combination $\gamma = \lambda_1 = \lambda_2 = 0$ representing classical PCA as a special case}. 
\sccomment{We illustrate the performance of RapPCA with polynomial kernel $k_h(x, x') = (1+x^\top x')^h$, with $h = 1$ (linear) for scenarios 1 and 2, and $h=2$ for scenario 3. We adopt thin-plate regression splines implemented by the \texttt{mgcv} R package to capture the spatial pattern in each PC, with dimension $m$ equal to the size of training data for RapPCA, and the penalty term is chosen as cubic spline penalty with second order derivatives. Since predictive PCA does not directly handle high-dimensionality, we truncate the spline basis terms $B$ at $m = 10$ for predictive PCA.
The value of $\delta$ is fixed at 0.05, which is small relative to $\gamma,\lambda_1,\lambda_2$ but helps avoid near-singular penalties.} 
{For each PC score $\tilde u_1, \tilde u_2, \tilde u_3$, test set predictions are obtained via a two-step procedure {\citep[see, e.g.,][]{wai2020random}}, 
where we first train a random forest model, 
and then conduct spatial smoothing on the residuals with thin-plate regression splines.}
{We discuss the role played by the primary tuning parameter $\gamma$ in this section. Figure~\ref{fig:sim-lambdas} in Appendix~\ref{app:more-sim-rslts} provides further details about the impact of $\lambda_1, \lambda_2$ on the solution of RapPCA}. 

\subsection{Scenario 1: PCs with Equal Contribution}
\label{subsec:sim-case1}
We first consider the setting with 3 well-predictable PCs and 3 PCs mainly consisting of structured and unstructured Gaussian errors. More specifically, {for the mean model in (\ref{equ:sim-pc-defn})}, we let $f_l(X) = X\beta_l$ where the entries of $\beta_l$ are drawn independently from Uniform$[-1,1]$, and $\sigma_l^2 = 0$, for $l = 1,2,3$; we let $f_l = 0$ and $\sigma_l^2 = 1$ for $l = 4,5,6$. $M$ is set to be $\Lambda \tilde M$, where $\Lambda$ is diagonal with $\Lambda_{ll} = 1/\text{sd}\big(\text{PC}^{(l)}\big)$ and the entries of $\tilde M$ are independent Uniform$[-1,1]$. In other words, we first scale the 6 PCs to have equal variability, and then assign random but overall comparable weights to them. 

{Comparing the two existing methods,} it is expected that in this scenario {predictive} PCA would outperform {classical} PCA in prediction without severely compromising approximation accuracy, as the predictable PCs explain a similar amount of variability compared to the unpredictable ones. Consequently, predictive PCA would achieve better overall performance as reflected by TMSE. RapPCA flexibly interpolates between {classical} and predictive PCA, and is expected to have comparable or better performance {than predictive PCA}.

Panel A in Figure~\ref{fig:sims-gm} presents the breakdown of MSPE, MSRE-trn 
and TMSE for the first PC by $\gamma$, in comparison to PCA and predictive PCA for this scenario. We only present this breakdown for the first PC but not the subsequent ones because {starting from the second PC}, different PCA algorithms will have different residuals $Y^{(l)}$ (recall Equations~\ref{equ:pca},~\ref{equ:predpca} and~\ref{equ:rappca}), and the PC-specific breakdown of metrics is consequently no longer comparable. We observe that MSPE decreases and training MSRE increases as we increase $\gamma$ for RapPCA, since this imposes a greater penalty on prediction errors in optimization. 

Figure~\ref{fig:sims-overall} compares the prediction MSEs for each PC, as well as the overall TMSE, MSPE and MSRE-trn for different methods. 
As expected for this scenario, we observe in Panel A that predictive PCA achieves lower prediction errors than PCA for each PC, as well as {lower} overall MSPE. {This} advantage does not cost a higher approximation gap (MSRE), since the predictable components in the outcome $Y$ explain a similar amount of variability as the unpredictable ones. {RapPCA} 
is able to adjust the penalties on the covariate versus spatial effect terms adaptively, especially when there are a large number of spatial basis terms (recall the separate tuning parameters $\lambda_1$ and $\lambda_2$ in Equation~\ref{equ:rappca}), as opposed to predictive PCA which is restricted to a low-dimensional combination of covariates and spatial basis functions (recall the term $Z$ in Equation~\ref{equ:predpca}). Consequently, predictive PCA achieves better overall performance (reflected by smaller TMSE) than PCA, while RapPCA further improves the prediction and overall performance due to its flexibility to {data-adaptively} capture the covariate and spatial effects.

\begin{figure}
    \centering
    \includegraphics[width=16.5cm]{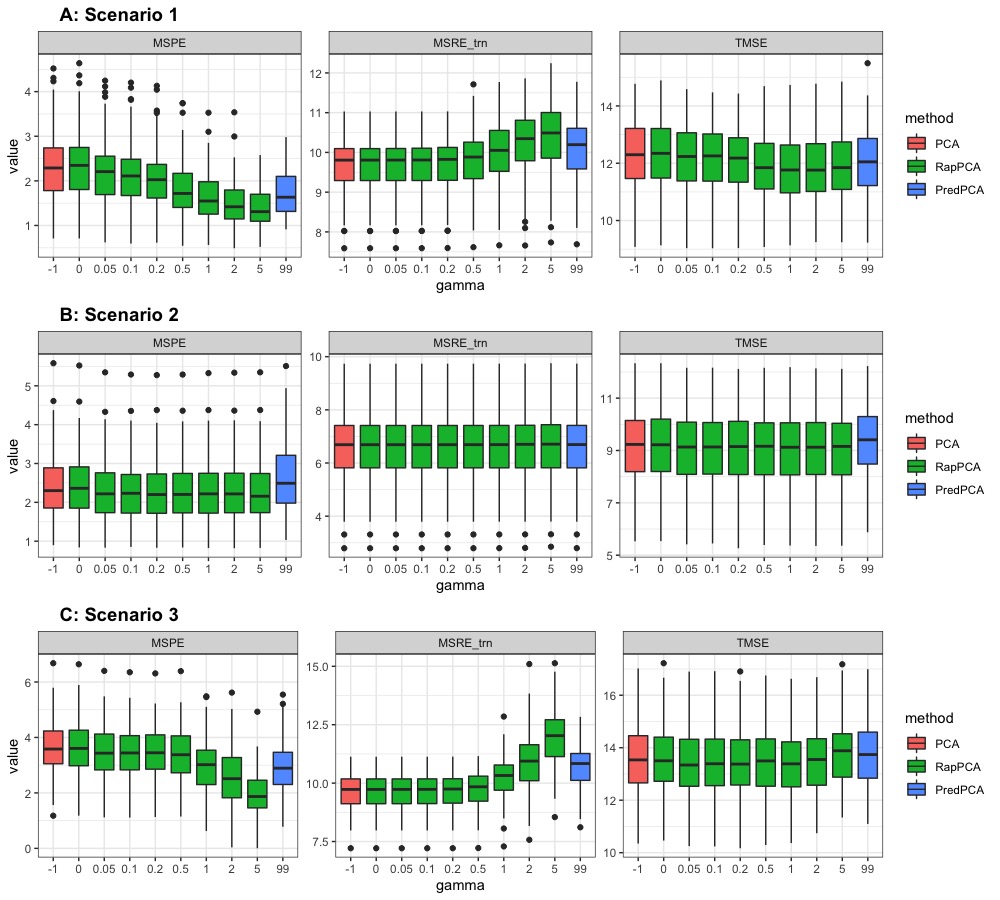}
    \caption{Breakdown of MSPE, MSRE-trn (MSRE on the training set) and TMSE for the first PC by $\gamma$ arcoss 100 replicates of data, with classical PCA (coded as $\gamma = -1$) and predictive PCA (\texttt{PredPCA}, coded as $\gamma = 99$) results presented for reference.}
    \label{fig:sims-gm}
\end{figure}

\begin{figure}
    \centering
    \includegraphics[width=16.5cm]{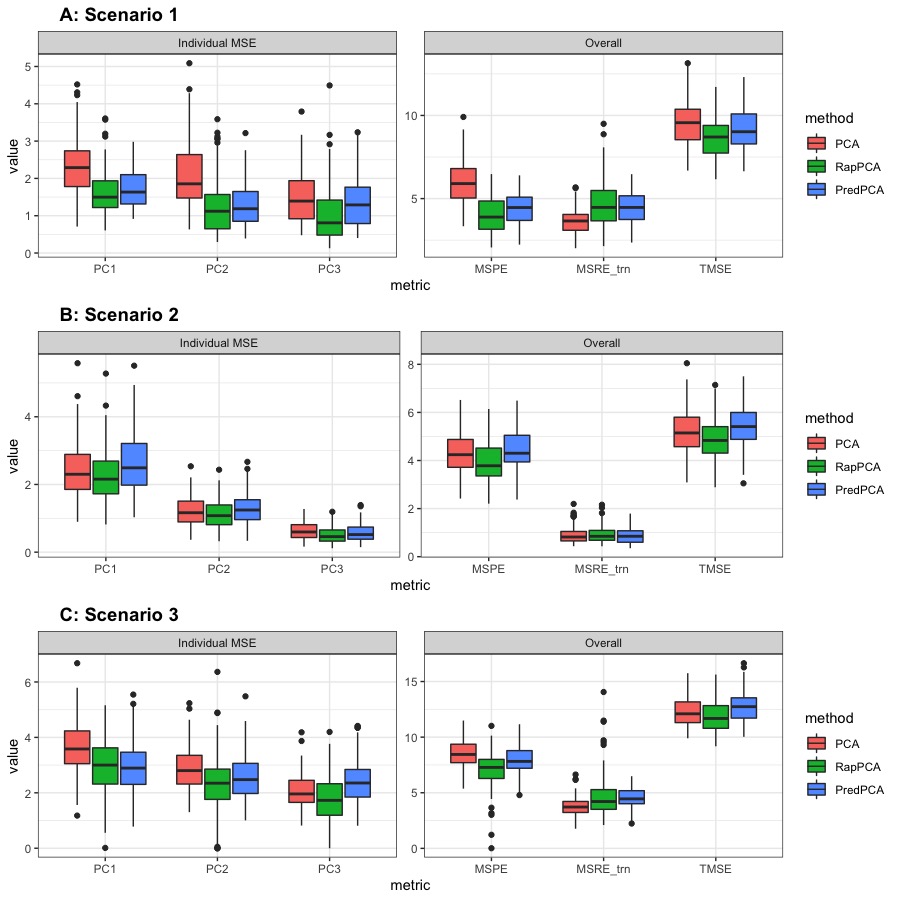}
    \caption{Distributions of individual prediction MSEs for each PC (left) and overall metrics for all PCs (right) across 100 replicates of data. {The optimal tuning parameters $\gamma,\lambda_1$ and $\lambda_2$ are selected separately for each simulation by minimizing TMSE via cross-validation. 
    }}
    \label{fig:sims-overall}
\end{figure}

\subsection{Scenario 2: PCs with Unequal Contribution}
We now consider a scenario where the predictable PCs explain a higher proportion of variability than the unpredictable PCs. In particular, the simulation setting is similar to Scenario 1, except that the entries of $\tilde M$ are first drawn from independent Uniform$[-1,1]$ distributions, and then the rows are scaled to have decaying norm (recall that the first 3 PCs are more predictable than the last 3 PCs). We expect in this case that PCA and predictive PCA would have comparable 
prediction, approximation and overall accuracy, since the well-predictable components of $Y$ coincide with those that explain the greatest amount of variance.

For the same reason, adjusting $\gamma$ in this case would not lead to steep changes in MSPE, MSRE or TMSE for {RapPCA}, as reflected by the flat curves of each metric in Panel B of Figure~\ref{fig:sims-gm}. Despite the similar behaviors of PCA and predictive PCA, Panel B of Figure~\ref{fig:sims-overall} reveals that RapPCA is still able to improve the prediction and overall performance by more flexibly exploiting the covariate and spatial information when identifying the PCs. 

\subsection{Scenario 3: PCs with Non-Linear Mean}
{In our last} simulation, we investigate the setting where the relationship between the true PCs and the covariates is non-linear. We modify the setting in Scenario 1 so that 
\[
    f_l(X) = X^{\odot 2}\beta_l + 2\sum_{j=1}^{\lfloor d/2 \rfloor} (X_{2j-1}\odot X_{2j}) \alpha_{lj}, 
\]
for $l = 1,2,3$, where $\odot$ and $\odot^ 2$ denote element-wise multiplication and square, and $\lfloor x\rfloor$ denotes the largest integer not exceeding $x$. In other words, we specify the mean function to be a combination of squared and interaction terms of the covariates, where interactions exist between pairs of consecutive covariates ($X_{\cdot 1}$ and $X_{\cdot 2}$; $X_{\cdot 3}$ and $X_{\cdot 4}$, etc). 

Panel C of Figure~\ref{fig:sims-gm} indicates that {by flexibly} handling non-linear effects, {RapPCA gains} a clear advantage in prediction accuracy compared {to} both predictive and {classical} PCA in this scenario, when we impose a large enough penalty $\gamma$ on prediction errors in optimization. It is not surprising that putting an emphasis on prediction leads to some loss in approximation accuracy; however, we observe from Panel C of Figure~\ref{fig:sims-overall} that by {balancing the two objectives} data-adaptively, RapPCA achieves meaningful improvement in MSPE while maintaining comparable, if not superior, overall performance.

\section{Applications}
\label{sec:data}

We illustrate the utility of RapPCA {using} real datasets in two different domain areas --- environmental epidemiology and spatial transcriptomics. The first application of dimension reduction in Section~\ref{subsec:trap} {($n = 309,\ p = 23$)} represents the case where practitioners seek to improve predictability (MSPE) and the overall performance (TMSE), and hence RapPCA is able to explicitly optimize these metrics as desired. 
The second example in Section~\ref{subsec:st} {($n = 607,\ p = 10,053$)} includes the scenario where tasks other than spatial prediction (e.g.,  clustering) {are} of interest; we demonstrate that RapPCA, though not directly incorporating the ultimate goal into its optimization, is able to capture meaningful biological and spatial information in the extracted PCs by enforcing predictability, {leading to} desirable model performance.

\subsection{Analysis of Seattle Traffic-Related Air Pollution Data}
\label{subsec:trap}

The performance of RapPCA in comparison with common existing methods, including {classical} and predictive PCA, is first illustrated with the multivariate traffic-related air pollution (TRAP) data in Seattle \citep{blanco2022characterization}. The study {leverages} a mobile monitoring campaign where a vehicle equipped with air monitors repeatedly collected two-minute samples at $n=309$ stationary roadside sites in the greater Seattle area. Repeated samples for the concentrations of 6 types of pollutants were obtained, including ultrafine particles (UFP), black carbon (BC), nitrogen dioxide (NO\textsubscript{2}), {carbon monoxide} (CO), carbon dioxide (CO\textsubscript{2}) and fine particulate matter (PM\textsubscript{2.5}). {Following \citet{blanco2022characterization}, we} focus on UFP, BC and NO\textsubscript{2} in our analysis {due to data quality considerations}. For UFP and BC, the concentrations were measured with multiple instruments corresponding to different measurement ranges or units. In particular, {NanoScan instrument with 13 different bin sizes}, as well as DiSCmini and PTRAK 8525 (with and without diffusion screen) instruments measured the concentration of UFP in terms of the counts of particles with different {size ranges}, along with the median size and overall count of particles.
Black carbon was assessed by micro-aethalometer (MA200), which {measures} the concentration of particles with different light absorbing properties, corresponding to 5 different ranges of wavelengths.

The median 2-minute visit concentrations were winsorized at the site level such that concentrations below the 5\textsuperscript{th} and above the 95\textsuperscript{th} quantile for a given site were set to those thresholds, respectively. This was done to reduce the influence of outliers in the measurements. These winsorized medians were then averaged for each site, leading to {$p=23$} annual average measurements of pollutant concentrations in total, including {17} for UFP, 5 for BC, and one for NO\textsubscript{2}. 
We use the concentration of NO\textsubscript{2} to normalize all other variables except {the median size measurements for UFP from the DiSCmini}. All variables are centered and scaled before running each PCA algorithm. 

We are interested in extrapolating the pollutant measurements to unobserved locations in the same study region. Because of the autocorrelation between these measurements, predicting each of them separately {might not always} lead to sensible results; instead, dimension reduction is needed before building spatial prediction models. {Dimension reduction also allows us to understand the health effects of pollutant mixtures rather than separate pollutants alone.} As in Section~\ref{sec:sims}, we compare the performance of RapPCA with {classical} and predictive PCA, each extracting the top 3 PCs from the {23} measurements of air pollutants. 
Due to the high dimensionality of geographical covariates, we first ran PCA on these 189 covariates and used the top 15 PCs as predictors for predictive PCA. {PCA and RapPCA were applied without this additional step}. \sccomment{We chose a linear kernel and the same configurations for thin-plate regression splines as in Section~\ref{sec:sims}. We adopted the same grid for the tuning of hyperparameters ($\lambda_1,\lambda_2/\lambda_1$) from 0.05 to 5 and the same value $\delta = 0.05$, but expanded the candidate range of $\gamma$ to $[0.05, 50]$ to accommodate wider ranges of prediction and representation errors}.
We examine the overall metrics as well as the individual prediction MSEs for each PC via 10-fold cross-validation {following} the same spatial prediction approach as Section~\ref{sec:sims}, namely, a random forest model followed by spatial smoothing via TPRS on the residuals {\citep{wai2020random}}.

Table~\ref{tab:trap-cv} compares each dimension reduction algorithm in terms of cross-validated TMSE, MSPE and MSRE-trn as well as prediction MSEs for each PC. Consistent with the findings from simulation studies, RapPCA achieves the lowest prediction and overall errors, while PCA has the smallest approximation gap on the training data. The advantage in predictability of RapPCA is reflected {in} both the overall MSPE and the individual MSEs for each PC. 
{In this particular data analysis task, predictive PCA has good predictive performance for the second PC but is otherwise similar to or outperformed by both classical PCA and RapPCA. More generally, in settings with high-dimensional predictors (covariates and/or spatial basis), predictive PCA is not guaranteed to show an advantage in predictability since it requires a separate processing step on these predictors instead of selecting the information to use for prediction data-adaptively as RapPCA does.} 

\begin{table}[ht]
\centering
\begin{tabular}{lllllll}
\hline
         & \multicolumn{3}{c}{Overall Metrics} & \multicolumn{3}{c}{Individual MSEs} \\
         & TMSE        & MSPE      & MSRE-trn      & PC1       & PC2        & PC3       \\ \hline
PCA      & 14.79       & 7.93     & 6.38     & 3.93       & 2.86     & 1.14     \\
PredPCA  & 14.81      & 7.16     & 7.28     & 3.41      & 2.53      & 1.22     \\
RapPCA & 13.92      & 6.66     & 6.75    & 2.62     & 2.92     & 1.12    \\ \hline
\end{tabular}
\caption{Comparison of overall metrics and individual prediction MSEs for each PC, assessed by 10-fold cross-validation on the Seattle traffic-related air pollution data. \sccomment{Standard deviations of these metrics evaluated across 10 cross-validation folds are reported in Table~\ref{tab:app-trap-cv} in Appendix~\ref{app:more-data-rslts}}.} 
\label{tab:trap-cv}
\end{table}

We then run dimension reduction with each of the PCA algorithms on the whole dataset to assess the spatial distribution of top 3 PC scores, where {for RapPCA we select} the tuning parameters {$\gamma,\lambda_1$ and $\lambda_2$ that lead} to the optimal TMSE. Next, we train spatial prediction models for each of the PC scores, and evaluate these models on a finer grid of 5,040 locations over the study region to obtain smoothed plots of each PC score.
{Figure~\ref{fig:trap-pcs} in Appendix~\ref{app:more-data-rslts} visualizes the smoothed, finer-grain PC scores obtained by each method across the study region}. Figure~\ref{fig:trap-loadings} presents the PC loadings reflecting the contribution of each pollutant on the PC scores. 
{Roughly categorizing various UFP particle sizes into small, moderate and large groups, one key difference we observe from Figure~\ref{fig:trap-loadings} between RapPCA and the two benchmark algorithms is that RapPCA identifies one size group to be the main contributor to each of the 3 PCs, while the other two groups have negligible loadings: small particles contributing to the top PC, moderately-sized particles to the second PC, and large ones to the third PC. In addition, BC is mainly contributing to the second PC according to RapPCA. 
This observation is consistent with findings from other studies, where aircraft emissions (primarily composed of small UFPs) and road traffic, especially diesel exhaust (comprising moderately-sized UFPs and BC), were identified as two key sources of traffic-related air pollution \citep{austin2021distinct, nie2022characterizing}. {These findings reflect the benefits in downstream model interpretability brought by dimension reduction that well preserves scientific and spatial information.} In contrast, neither predictive nor classical PCA effectively distinguishes the contributions of UFPs with different sizes or the contribution of BC.}

\begin{figure}
    \centering
    \includegraphics[width=16.5cm]{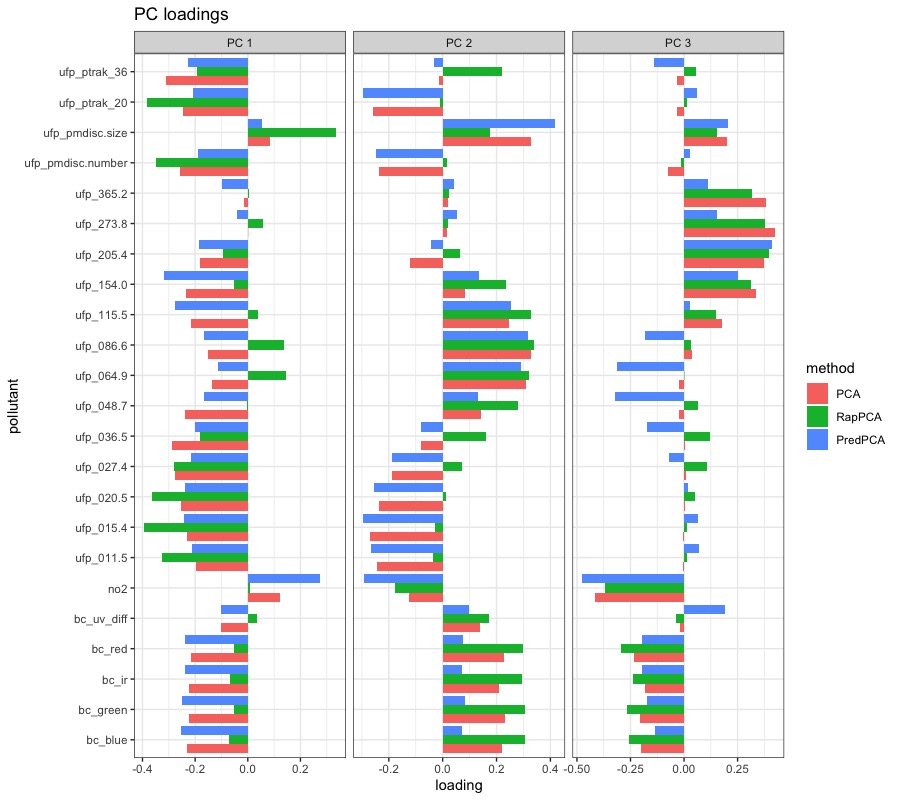}
    \caption{PC loadings {for each of the top 3 PCs obtained by PCA, predictive PCA and RapPCA} for each pollutant. There are {3} types of pollutants, {UFP, BC and NO\textsubscript{2}}. The suffix, if applicable, represents the properties of, or the instruments used to measure, each pollutant. In particular, the numeric suffix after \texttt{ufp\textunderscore} corresponds to the range of sizes for the particles. {\texttt{ufp\textunderscore ptrak\textunderscore 36} represents PTRAK measurements with diffusion screen, and \texttt{ufp\textunderscore ptrak\textunderscore 20} represents the difference between PTRAK measurements without and with diffusion screen. Black carbon (BC) measurements {corresponding to} different wavelengths are shown: blue (\texttt{bc\textunderscore blue}), green (\texttt{bc\textunderscore green}), infrared (\texttt{bc\textunderscore ir}), red (\texttt{bc\textunderscore red}), and ultraviolet (\texttt{bc\textunderscore uv}). Ultraviolet measurements were transformed to represent the difference (\texttt{bc\textunderscore uv\textunderscore diff}) between ultraviolet and infrared ranges.}}
    \label{fig:trap-loadings}
\end{figure}

\subsection{Analysis of HER2-Positive Breast Tumor {Spatial Transcriptomics}}
\label{subsec:st}

{In this section,} we present another use case of RapPCA by analyzing spatial transcriptomics data. While dimension reduction is also a common data handling step in these applications, there are a variety of downstream modeling tasks where spatial prediction, as in the Seattle TRAP study, may not be the primary focus. We will nevertheless illustrate that optimizing the balance between predictability and representability with RapPCA still leads to desirable performance in these tasks.

We analyze the human epidermal growth factor receptor 2 positive ({HER2+}) breast tumor data, which {includes} expression {measures}  of genes in HER2+ tumors from eight individuals {\citep{andersson2021spatial}}. We focus on the first tumor section of the last individual, coded as sample H1 in \citet{andersson2021spatial}, where the data consists of expression counts of $15,030$ genes on $613$ {tissue} locations. Following the same filtering steps as in \citet{shang2022spatially}, we removed genes with non-zero expression at less than 20 locations and those confounded by technical artifacts \citep{andersson2021spatial}, as well as locations with non-zero expressions for less than 20 genes. The filtered set of data contains $p=10,053$ 
genes measured on $n=607$ spots, which is then normalized via regularized negative binomial regression as implemented by the \texttt{Seurat R} package \citep{hafemeister2019normalization}.

Due to the large number of genes along with the noise present in the expression measurements \citep{rao2021exploring}, dimension reduction is commonly performed before the main analytic tasks on genomic data to extract biologically meaningful signals as well as to facilitate computation; see \citet{sun2019accuracy, zhao2021spatial, shang2022spatially} for related examples. We focus on two modeling tasks: \textit{spatial extrapolation} and \textit{domain detection}. In addition to the three PCA methods (PCA, predictive PCA, RapPCA) that were investigated in previous sections, we also include spatial PCA in our comparison. Spatial PCA is a probabilistic PCA algorithm for spatial transcriptomics data which incorporates spatial proximity information when modeling PC scores {\citep{shang2022spatially}}.

The first application is more similar to the analysis in Section~\ref{subsec:trap}, where we are interested in {predicting the gene expression PC scores} at tissue locations with {no} measurements. This prediction problem is involved, for example, in the reconstruction of high-resolution spatial maps of gene expression {\citep{gryglewski2018spatial, shang2022spatially}}. For each dimension reduction method, we extract the top 20 PCs from the expression of all genes, and conduct 10-fold cross-validation to assess TMSE, MSPE and MSRE-trn, where spatial prediction is done by random forest followed by TPRS smoothing as in previous sections. For spatial PCA, we also include results from its built-in spatial prediction function (using the \texttt{SpatialPCA R} package) for completeness. 
\sccomment{There are no covariates involved and we only use spatial information for both the spatial extrapolation and domain detection tasks. We adopt the same spatial splines and hyperparameter ranges as in the air pollution study in Section~\ref{subsec:trap}, and recall that the dimension $m$ of spatial basis is truncated to 10 for predictive PCA whereas RapPCA sets this dimension to the size of the training data. } 
Table~\ref{tab:st-cv} summarizes the overall metrics along with individual prediction MSEs for the top 3 PCs. RapPCA demonstrates significant advantage in overall prediction accuracy (MSPE) compared to all other methods, followed by predictive PCA which also has the lowest per-PC MSEs for the top 3 PCs. 
Consistent with findings from previous sections, RapPCA achieves the optimal trade-off between prediction and approximation gaps, as reflected by TMSE.

\begin{table}[]
\centering
\begin{tabular}{lllllll}
\hline
                     & \multicolumn{3}{c}{Overall Metrics} & \multicolumn{3}{c}{Individual MSEs} \\
                     & TMSE      & MSPE     & MSRE-trn      & PC1       & PC2       & PC3      \\ \hline
PCA                  & 272.21    & 32.24    & \sccomment{219.97}   & 6.64      & 5.09      & 3.18     \\
PredPCA              & 278.08    & 31.85    & \sccomment{241.91 }    & 3.93      & 4.12      & 2.26    \\
SpatialPCA          & 292.78     & 46.09    & \sccomment{236.88}     & 6.84      & 5.89     & 3.69     \\
SpatialPCA: built-in & 296.59    & 49.90    & \sccomment{236.88 }     & 6.81      & 5.96      & 3.81     \\
RapPCA               & 271.35    & 20.56   & \sccomment{243.27}     & 5.59       & 4.86      & 3.79     \\ \hline
\end{tabular}
\caption{Comparison of overall metrics and individual prediction MSEs for each of the top 3 PCs, assessed by 10-fold cross-validation on the breast tumor data. \sccomment{Standard deviations of these metrics evaluated across 10 cross-validation folds are reported in Table~\ref{tab:app-st-cv} in Appendix~\ref{app:more-data-rslts}}.}
\label{tab:st-cv}
\end{table}

Figure~\ref{fig:st-pcs} visualizes the top 3 PC scores from the expressions of all genes. We observe that spatial PCA and predictive PCA produce spatially smoother surfaces of PC scores compared to the other two methods. This is natural for predictive PCA since it prioritizes spatial predictability of the PC scores, which results in stronger spatial smoothing; for spatial PCA, the prediction-representation balance is implicit and less straightforward to interpret. The gap in the prediction performance of spatial or predictive PCA {compared} with RapPCA {may be} the consequence of over-smoothing, which also indicates the effectiveness of explicit optimization for the prediction-representation balance achieved by (\ref{equ:rappca}). {Figure~\ref{fig:st-highres} in Appendix~\ref{app:more-data-rslts} visualizes the smoothed high-resolution maps of PC scores based on predictions at unmeasured locations.}

\begin{figure}[!ht]
    \centering
    \includegraphics[width=16.5cm]{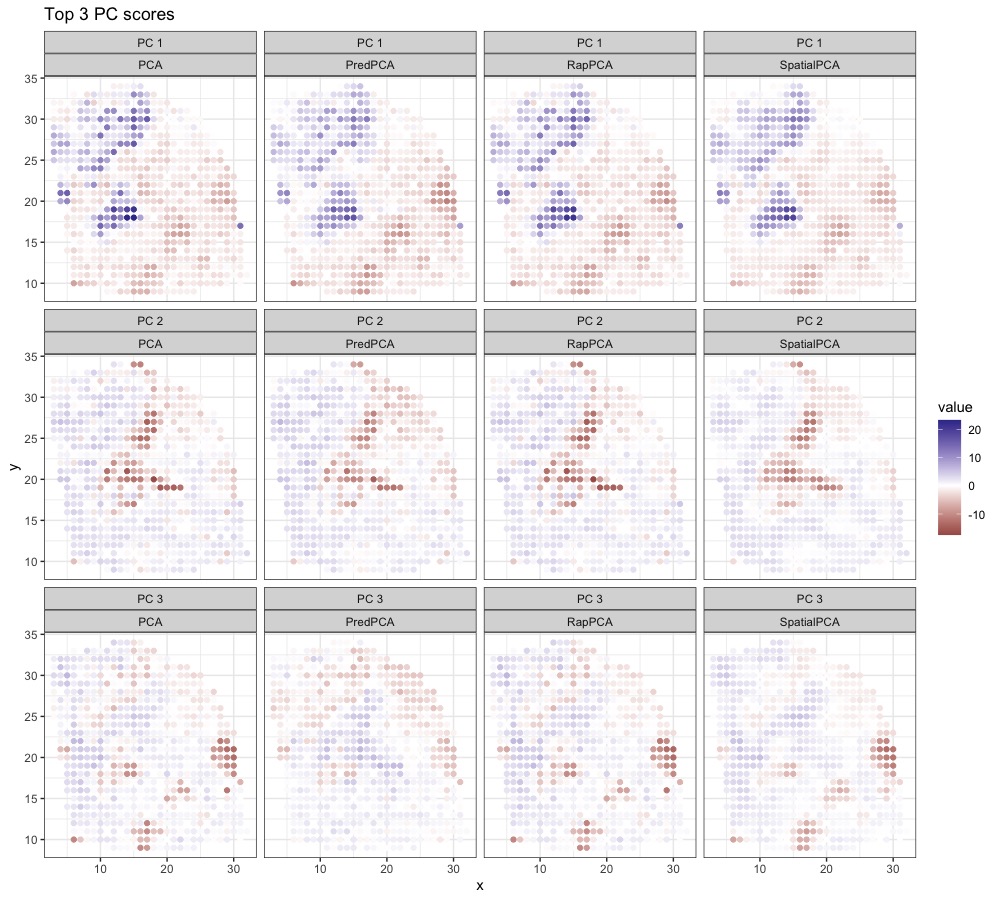}
    \caption{Spatial distribution of the top 3 PC score values {extracted by PCA, predictive PCA, RapPCA and spatial PCA}, based on measured gene expression in the HER2-positive breast tumor data.}
    \label{fig:st-pcs}
\end{figure}

In the second application, we investigate the problem of domain detection following the dimension reduction step. In domain detection, different sections of tissues are identified as various spatially coherent and functionally distinct regions \citep{dong2022deciphering, shang2022spatially}, {providing} helpful insights {into} the biological function of tissues. To this end, we conduct domain detection via the walktrap clustering algorithm \citep{pons2005computing} using the top 20 PCs extracted by each {algorithm}. The number of clusters is set to align with the ``ground truth'' labels based on pathologist annotations of tissue regions in the original study \citep{andersson2021spatial}. {The hyperparameters $\gamma,\lambda_1, \lambda_2$ are chosen based on silhouette value measuring the difference between nearest-cluster versus intra-cluster distances \citep{rousseeuw1987silhouettes}.}

\sccomment{To provide some heuristics on our choice of $r$ as described in Section~\ref{subsec:method-2}, Figure~\ref{fig:elbow_curves} visualizes the trends for the cumulative sum of PC-wise prediction MSEs, along with MSRE-trn of RapPCA. We observe that the performance of RapPCA starts to stabilize for $r$ between 10 and 15, justifying $r=20$ to be sufficient.}

\begin{figure}
    \centering
    \includegraphics[width=15cm]{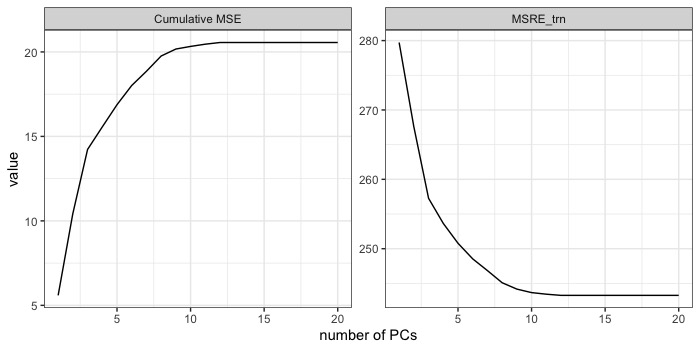}
    \caption{\sccomment{Changes in cumulative prediction MSE and MSRE-trn with the number of PCs for RapPCA on the breast tumor dataset. 
    }}
    \label{fig:elbow_curves}
\end{figure}

Figure~\ref{fig:st-domains} visualizes the detected spatial domains based on PCs extracted from different dimension reduction procedures, compared to the ground truth labels. In general, the relative performance of each algorithm differs across spatial domains. For example, PCA fails to detect the adipose tissue region and produces noisier results for regions surrounding invasive cancer cells. Predictive PCA and RapPCA both mis-classify part of the cancer in situ region as invasive cancer, where predictive PCA results are noisier for the top left region, and RapPCA shows larger uncertainty regarding adipose tissue. Spatial PCA, on the other hand, infers part of the connective tissue to be cancerous. 

\begin{figure}[!ht]
    \centering
    \includegraphics[width=13cm]{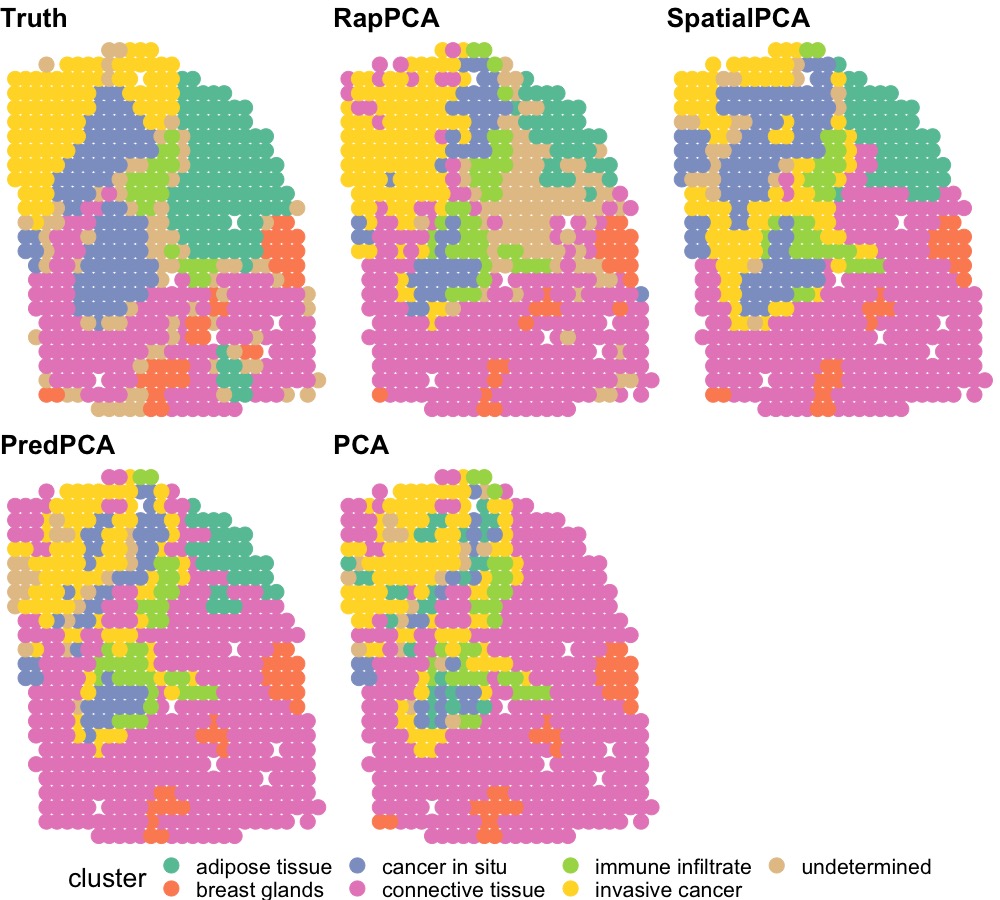}
    \caption{Detected spatial domains and annotated ground truth}
    \label{fig:st-domains}
\end{figure}

To make a more in-depth comparison, we examine the breakdown of clustering accuracy by ground truth labels; {that is}, whether or not each approach correctly classifies the spots belonging to each spatial domain, with the undetermined cluster removed from comparison. Figure~\ref{fig:st-metrics} presents this breakdown for each method in terms of precision, recall and F1 score. As observed from the spatial domain plots, the relative accuracy for each method differs by region, and in particular, RapPCA has the best performance in both precision and recall on the identification of invasive cancer regions, whereas spatial PCA shows an advantage in recall for cancer in situ. 

\begin{figure}
    \centering
    \includegraphics[width=16.5cm]{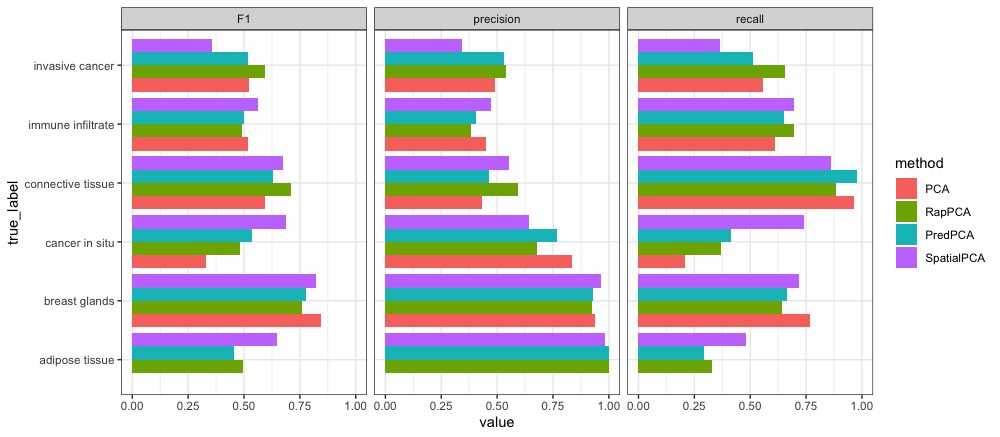}
    \caption{Breakdown of domain detection accuracy by true label for each algorithm. Note that the metrics for the adipose tissue region {(bottom row)} are not well-defined for PCA because it fails to detect any spot in this region.}
    \label{fig:st-metrics}
\end{figure}
\section{Discussion}
\label{sec:discussion}

When dimension reduction is conducted as an intermediate step before a spatial modeling task of interest, there are two typically conflicting considerations guiding the choice of approaches. The more obvious one is how closely the lower dimensional score represents the original variables, which we refer to as \textit{representability}. Another important aspect is the \textit{predictability} of the resulting scores, i.e., how well they could be expressed or predicted by {available covariates and/or spatial effects}. 

We discussed how existing dimension reduction algorithms fit into this general framework of optimizing either criterion, and proposed a flexible interpolation between them that achieves the optimal representability-predictability trade-off. {Our proposal, called RapPCA, can also handle} high-dimensional {predictors (including covariates and spatial basis)} and non-linear relationships {between covariates and PC scores}. Simulation studies under different scenarios illustrate the gain in downstream prediction accuracy by our method, which also achieves smaller overall errors when both predictability and representability are taken into account.
Applications to real datasets in different scientific domains further demonstrate the utility of our proposed method, even in analytic tasks that do not explicitly involve prediction.

The balance between prediction and representation performance can be viewed as a generalized form of bias-variance trade-off. While methods such as {classical} PCA minimize the representation gap specifically for the training data at hand, such representation may capture excessive noise if it explains a large amount of variability in the data. In contrast, the formulation of predictive PCA restricts the PC scores to fall within a certain model space, the linear span of the covariates, which is a form of regularization enforcing the smoothness of the PCs. Many probabilistic dimension reduction approaches implicitly address both aspects, while our proposal seeks the optimal balance in an explicit and interpretable way.

While we used TMSE to guide the choice of tuning parameters in {most of} our examples, {in practice tuning parameter selection could} be driven by the specific analytic goal. {For example, we used silhouette value reflecting the goodness of clustering for the domain detection task in Section~\ref{subsec:st}; while for the health effect analysis of air pollution in \citet{jandarov2017novel}, the main focus was spatial extrapolation accuracy and hence prediction error is of primary interest}. Researchers could examine the trend of prediction, representation and total errors by each tuning parameter, and in particular $\gamma$ in (\ref{equ:rappca}), on a set of test data, and choose the combination leading to the desired trade-off. {Our empirical evaluations suggest} that striking such balance with our proposed method would most often improve, or at least maintain similar, overall errors compared to the existing alternatives.


\bibliographystyle{rss}
\bibliography{ref}

\newpage
\newpage

\begin{center}
{\large\bf APPENDIX}
\end{center}

\appendix
\renewcommand{\thefigure}{S\arabic{figure}} 
\setcounter{figure}{0} 
\renewcommand{\theHfigure}{S\arabic{figure}}
\renewcommand{\thetable}{S\arabic{table}} 
\setcounter{table}{0} 
\renewcommand{\theHtable}{S\arabic{figure}}
\section{Proof}
\label{app:proof}
\begin{proof}[Proof of Theorem~\ref{thm:optim}]

    {For the $l$th component}, denote the singular value decomposition of $Y^{(l)}$ as $Y^{(l)} = \mathcal S^{(l)}{\mathcal D^{(l)}} {\mathcal T^{(l)}}^\top$. Let ${\eta}:=\left[\alpha^\top \quad \sqrt{\frac{\lambda_2}{\lambda_1}}\beta^\top\right]^\top$, $Z := \left[K \quad \sqrt{\frac{\lambda_2}{\lambda_1}}B\right]$, and let the combined penalty matrix be
    \[
    P := \begin{bmatrix}
        \tilde K & 0 \\ 0 & \frac{\lambda_2}{\lambda_1}\tilde Q
    \end{bmatrix}.
    \]
    {We verify that} the optimization problem (\ref{equ:rappca}) has a unique {global minimizer} given by
        \begin{itemize}
            \item {$\tilde v$ as defined in Algorithm~\ref{thm:optim}};
            \item $\tilde \eta = (\gamma Z^\top Z + \lambda_1 P)^{-1}(\gamma Z^\top \mathcal S^{(l)}{\mathcal D^{(l)}}q)$, with $q$ defined {as in Algorithm~\ref{thm:optim}};
        \end{itemize}
    {by doing so, we  establish the optimality of our solution given by Algorithm~\ref{algo:rappca}.}
    
    Recall the form of the optimization problem given in (\ref{equ:rappca}):
    \begin{align*}
        \notag &\min_{v,\alpha,\beta}\ f_{\gamma,\lambda_1,\lambda_2}(v,\alpha,\beta):= \left\lVert Y^{(l)}-Y^{(l)}vv^\top \right\rVert_F^2 +
    \gamma\left\lVert Y^{(l)}v-(K\alpha+B\beta) \right\rVert_2^2 +\lambda_1 \alpha^\top \tilde K\alpha + \lambda_2 \beta^\top \tilde Q\beta \\
    &\text{s.t. } v^\top v = 1.
    \end{align*}
    {Suppressing the superscript $(l)$ for simplicity,} we denote $y = Y^{(l)}$ and start by examining the first term in the objective function, which can be expressed as
    \begin{align}
        \notag \left\lVert Y^{(l)}-Y^{(l)}vv^\top \right\rVert_F^2
        &= \text{tr}\left( (y - yvv^\top)^\top (y-yvv^\top) \right) \\
        \notag &= \text{tr}\left(y^\top y - 2y^\top yvv^\top + vv^\top y^\top yvv^\top\right) \\
        &\propto -2\text{tr}\left( v^\top {\mathcal T}\mathcal D\mathcal S^\top \mathcal S\mathcal D{\mathcal T}^\top v \right) + \trace{v^\top vv^\top {\mathcal T}\mathcal D{\mathcal S}^\top {\mathcal S}\mathcal D{\mathcal T}^\top v},
        \label{equ:term1-trace}
    \end{align}
    where $\text{tr}(\cdot)$ denotes the trace of a matrix, $\propto$ means equal up to a constant not depending on $v,\alpha$ or $\beta$, and we have made use of the cyclic property of trace. Since $v^\top v = 1$ and ${\mathcal S}^\top {\mathcal S} = I$, denoting $q:={\mathcal T}^\top v$, (\ref{equ:term1-trace}) {can be written} as
    \begin{align*}
        \left\lVert Y^{(l)}-Y^{(l)}vv^\top \right\rVert_F^2
        &= -2\trace{q^\top \mathcal D^2 q} + \trace{q^\top \mathcal D^2 q} = -{q^\top  \mathcal D^2 q}.
    \end{align*}

    Under the reparametrization in Theorem~\ref{thm:optim}, namely, ${\eta}:=\left[\alpha^\top \quad \sqrt{\frac{\lambda_2}{\lambda_1}}\beta^\top\right]^\top$ and $Z := \left[K \quad \sqrt{\frac{\lambda_2}{\lambda_1}}B\right]$, the objective function becomes
    \begin{align}
        \notag f_{\gamma,\lambda_1,\lambda_2}(v,\eta) 
        &= -q^\top \mathcal D^2 q + \gamma\left\lVert {\mathcal S}\mathcal Dq-Z\eta \right\rVert_2^2 +\lambda_1 \eta^\top P \eta,
    \end{align}
    which is a quadratic function of $\eta$ for fixed $q$, with
    \[
        \frac{\partial f}{\partial \eta} = 2\gamma Z^\top (Z\eta - {\mathcal S}\mathcal Dq) + 2\lambda_1 \sccomment{P^{1/2}}\eta,
    \]
    \sccomment{where the square root $P^{1/2}$ is well-defined because $\tilde K$ and $\tilde Q$ are both positive semidefinite.} Since the Hessian $\partial^2 f/\partial\eta^2 = 2\gamma Z^\top Z + \sccomment{2\lambda_1P^{1/2}}$ is positive semidefinite, we can profile out $\eta$ by setting it to the minimizer $\tilde\eta(q) := (\gamma Z^\top Z + \lambda_1 P)^{-1}(\gamma Z^\top {\mathcal S}\mathcal Dq)$. The objective function can thus be rearranged as
    \begin{align}
        \notag f_{\gamma,\lambda_1,\lambda_2}(v,\alpha,\beta)
        &:= g_{\gamma,\lambda_1}(q,\tilde\eta) 
        = -q^\top \mathcal D^2 q + \gamma({\mathcal S}\mathcal Dq - Z\tilde\eta)^\top ({\mathcal S}\mathcal Dq - Z\tilde\eta) + \lambda_1 \tilde\eta^\top P \tilde\eta \\
        \notag &= -q^\top \mathcal D^2 q + \gamma(q^\top \mathcal D^2 q - 2\tilde\eta^\top Z^\top {\mathcal S}\mathcal Dq + \tilde\eta^\top Z^\top Z\tilde\eta) + \lambda_1\tilde\eta^\top P\tilde\eta\\
        \notag &= (\gamma - 1) q^\top \mathcal D^2 q
        - 2\gamma^2 q^\top \mathcal D{\mathcal S}^\top Z(\gamma Z^\top Z + \lambda_1 P)^{-1} Z^\top {\mathcal S}\mathcal Dq \\
        \notag &\quad\quad + \gamma^2 q^\top \mathcal D{\mathcal S}^\top Z(\gamma Z^\top Z+\lambda_1 P)^{-1}(\gamma Z^\top Z+ \lambda_1 P)(\gamma Z^\top Z+\lambda_1 P)^{-1} Z^\top {\mathcal S}\mathcal Dq \\
        &= -q^\top \left[-(\gamma-1)\mathcal D^2 + \gamma^2 \mathcal D{\mathcal S}^\top Z(\gamma Z^\top Z+\lambda_1 P)^{-1}Z^\top {\mathcal S}\mathcal D\right] q:= -q^\top A q,
        \label{equ:matrix-A}
    \end{align}
    where $A:=-(\gamma-1)\mathcal D^2 + \gamma^2 \mathcal D{\mathcal S}^\top Z(\gamma Z^\top Z+\lambda_1 P)^{-1}Z^\top {\mathcal S}\mathcal D$. 

    It then follows that the original optimization problem (\ref{equ:rappca}) is equivalent to
    \begin{align*}
        \notag &\min_{q}\ f_{\gamma,\lambda_1}(q):= -q^\top A q \quad \text{s.t. } q^\top q = 1,
    \end{align*}
    for which the optimal solution is the normalized first eigenvector of $A$ (i.e., the one corresponding to the largest eigenvalue). 
    {By construction and the properties of eigendecomposition, such $q$ is guaranteed to be the global minimizer of $f_{\gamma,\lambda_1}(q)$}.
    Recalling the form of $\tilde\eta(q)$ and the fact that the untransformed optimal solution satisfies ${\mathcal T}^\top \tilde v = \tilde q$ and $\tilde u = Y^{(l)}\tilde v$ completes the proof.
\end{proof}

\section{Additional Numerical Results}
\label{app:more-rslts}

\subsection{Optimality of Proposed Solution}
\label{app:optimality}
We use the first replicate (out of 100 total) of data in Simulation Scenario 1 (Section~\ref{subsec:sim-case1}) to numerically verify our claim in Theorem~\ref{thm:optim}, i.e., the optimality of the derived solution. 

Specifically, we solve the optimization problem (\ref{equ:rappca}) to extract the first PC score, and vary the first two entries of the optimal loadings $\tilde v$ while keeping everything else (including the PC scores $\tilde u$ and coefficients $\tilde\eta$) intact, and ensuring that the constraint $v^\top v$ is still satisfied. This is done under polar coordinates: letting $\theta\in[0, 2\pi)$, we write $v^*(\theta)$ as the modified loading vector, where the entries $v_j^*(\theta) = \tilde v_j$ for $j > 2$. We let $v^*_1(\theta) = \rho\sin\theta$ and $v^*_2(\theta) = \rho\cos\theta$ where $\rho = \sqrt{1-\sum_{j>2} {\tilde v_j}^2}$. Then, as $\theta$ varies within $[0,2\pi)$, the two entries $(v_1^*(\theta), v_2^*(\theta))$ take all possible combinations that satisfy the constraint. 

We explore different values of tuning parameters $\gamma,\lambda_1, \lambda_2$ and compare the objective function $f_{\gamma,\lambda_1,\lambda_2}(u,v,\eta)$ evaluated at the optimal solution $(\tilde u, \tilde v, \tilde\eta)$ versus the modified solution $(\tilde u, v^*(\theta), \tilde\eta)$ for $\theta\in[0,2\pi)$. We plot their differences against $\theta$ for various values of tuning parameters in Figures~\ref{fig:suppl-1} and~\ref{fig:suppl-2}. {Note that the optimal $\theta$, i.e., the value leading to the optimal loading $v^*(\theta)$, may differ with different sets of tuning parameters $\gamma,\lambda_1$ and $\lambda_2$.} The fact that each curve is always above or equal to zero, and reaches zero exactly once, verifies the uniqueness of the optimal solution, as well as the optimality of the proposed form of $\tilde v$, at least for fixed $\tilde u$ and $\tilde \eta$.

\begin{figure}[ht]
    \centering
    \includegraphics[width=16.5cm]{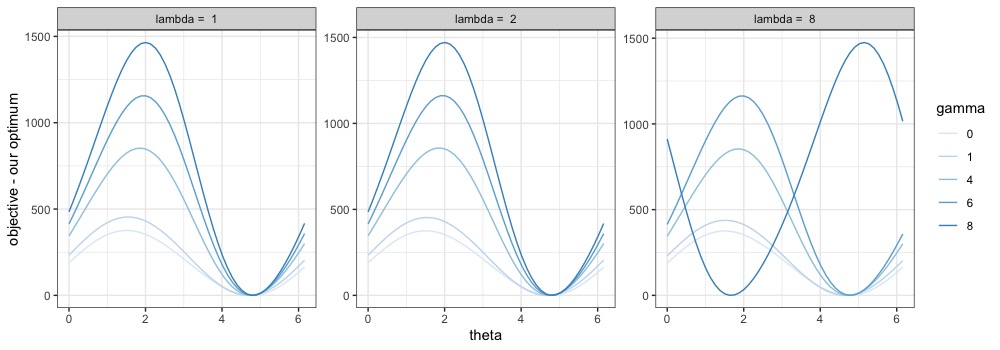}
    \caption{{Difference between the objective function evaluated over a range of values for the first two parameters in the loading vector and the optimum achieved by our algorithm. The $x$-axis is parameterized by $\theta$, the polar coordinate angular representation of the candidate parameters, and we fix $\lambda_1=\lambda_2$. Each panel corresponds to a value of $\lambda_1$ and $\lambda_2$.}}
    \label{fig:suppl-1}
\end{figure}

\begin{figure}[ht]
    \centering
    \includegraphics[width=16.5cm]{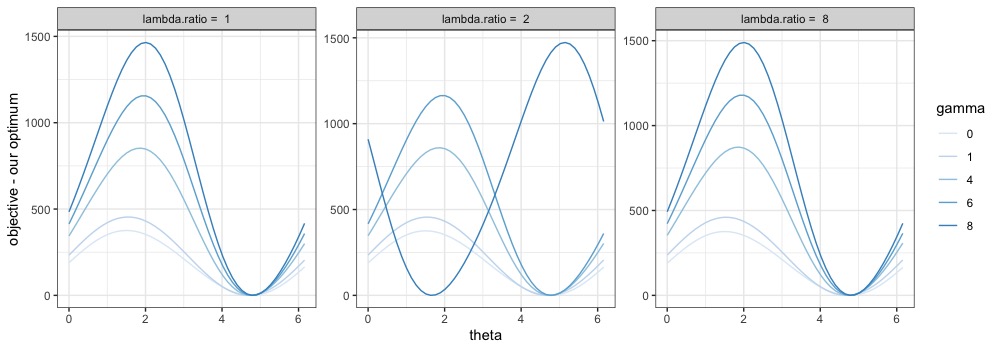}
    \caption{{Difference between the objective function evaluated over a range of values for the first two parameters in the loading vector and the optimum achieved by our algorithm. The $x$-axis is parameterized by $\theta$, the polar coordinate angular representation of the candidate parameters, and we fix $\lambda_1=1$. Each panel corresponds to a value of the ratio $\lambda_2/\lambda_1$.}}
    \label{fig:suppl-2}
\end{figure}

\subsection{Additional Simulation Results}
\label{app:more-sim-rslts}

{Figure~\ref{fig:sim-lambdas} is an analog of Figure~\ref{fig:sims-gm} presented in our simulation studies, which visualizes the variation of TMSE, MSPE and MSRE-trn for the first PC with respect to the combination of $\lambda_1$ and $\lambda_2$, instead of $\gamma$. For each combination, the additional tuning parameter $\gamma$ is chosen such that it leads to the lowest TMSE.
In Scenarios 1 through 3, the median values of the selected parameters $(\lambda_1, \lambda_2/\lambda_1)$ via cross-validation are (0.5, 1), (1, 0.75) and (0.5, 0.25) respectively, all of which are reasonably close to near-optimal region(s) in Figure~\ref{fig:sim-lambdas}. This reflects the effectiveness of the cross-validation procedure to control the degree of freedom of coefficients $\alpha$ and $\beta$ in (\ref{equ:rappca}).} 

\subsection{Additional Data Analysis Results}
\label{app:more-data-rslts}

\sccomment{Tables~\ref{tab:app-trap-cv} and~\ref{tab:app-st-cv} report the standard deviations of all evaluation metrics in Tables~\ref{tab:trap-cv} and~\ref{tab:st-cv} respectively, capturing their variability across 10 cross-validation folds. } 

\sccomment{The larger variability of MSRE-trn for RapPCA is expected. We recall that RapPCA includes classical PCA as a special case; and therefore for each PC, RapPCA may revert to classical PCA and have smaller MSRE-trn in some cross-validation folds than others. Nevertheless, the comparable variability of TMSE between RapPCA and other methods indicate that the overall performance and predictability-representability tradeoff of RapPCA remain consistently superior to other benchmark methods.}

\begin{table}[!ht]
\centering
\begin{tabular}{lllllll}
\hline
         & \multicolumn{3}{c}{Overall Metrics} & \multicolumn{3}{c}{Individual MSEs} \\
         & TMSE        & MSPE      & MSRE-trn      & PC1       & PC2        & PC3       \\ \hline
PCA      & 4.45       & 3.50     & 0.16     & 2.24       & 1.60      & 0.52     \\
PredPCA  & 4.80      & 3.22     & 0.30     & 1.95      & 1.49      & 0.39      \\
RapPCA & 4.26      & 2.73     & 0.65     & 2.25      & 1.51      & 0.69     \\ \hline
\end{tabular}
\caption{\sccomment{Standard deviations for all evaluation metrics across 10 cross-validation folds for the Seattle traffic-related air pollution data.}}
\label{tab:app-trap-cv}
\end{table}

\begin{table}[]
\centering
\begin{tabular}{lllllll}
\hline
                     & \multicolumn{3}{c}{Overall Metrics} & \multicolumn{3}{c}{Individual MSEs} \\
                     & TMSE      & MSPE     & MSRE-trn      & PC1       & PC2       & PC3      \\ \hline
PCA                  & 10.86    & 3.66    & \sccomment{0.90}   & 2.22      & 1.79      & 0.76     \\
PredPCA              & 11.75    & 2.96    & \sccomment{1.39}    & 1.06     & 1.13      & 0.74     \\
SpatialPCA          & 11.45    & 4.74    & \sccomment{1.21}     & 2.24      & 2.13      & 0.90     \\
SpatialPCA: built-in & 12.04    & 5.22    & \sccomment{1.21}     & 2.18      & 2.06      & 0.96     \\
RapPCA               & 10.82    & 4.11    & \sccomment{6.04}     & 2.21      & 1.59      & 1.16     \\ \hline
\end{tabular}
\caption{\sccomment{Standard deviations for all evaluation metrics across 10 cross-validation folds for the breast tumor data.}}
\label{tab:app-st-cv}
\end{table}

\begin{figure}
    \centering
    \includegraphics[width=16.5cm]{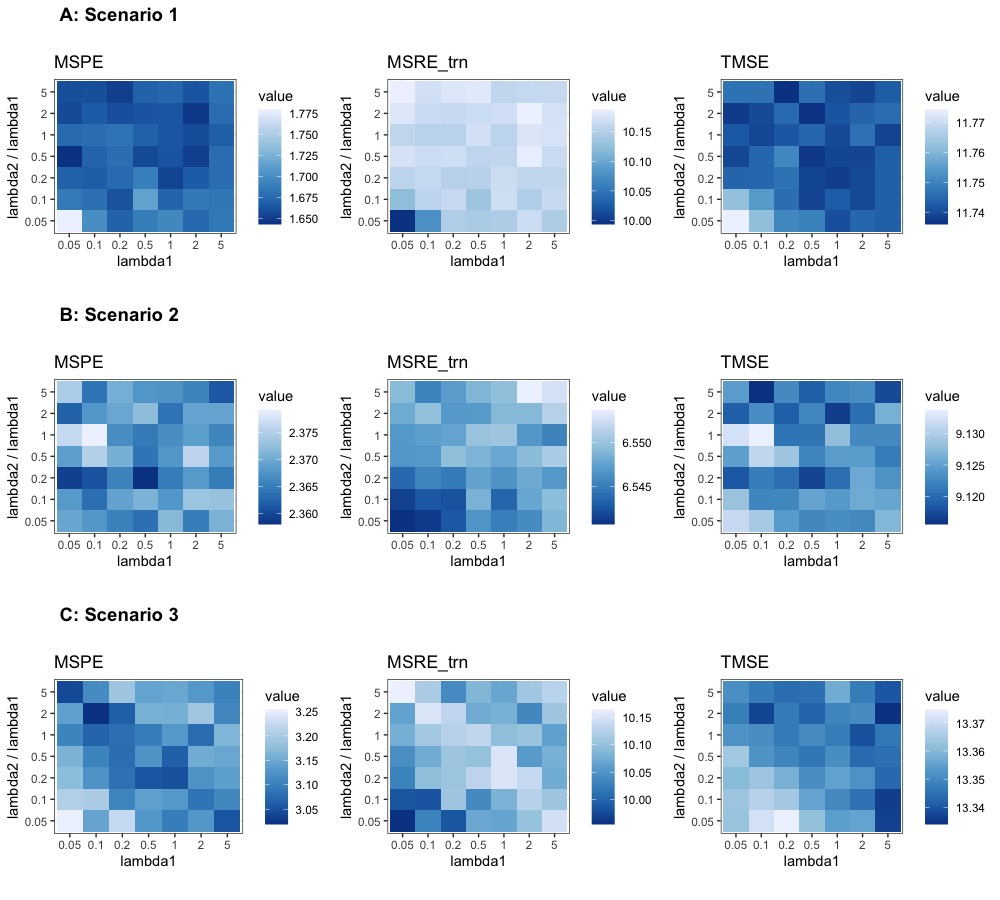}
    \caption{{Average TMSE, MSPE and MSRE-trn for the first PC extracted by RapPCA arcoss 100 replicates of data, plotted against the combination of $\lambda_1$ and the ratio $\lambda_2/\lambda_1$.}}
    \label{fig:sim-lambdas}
\end{figure}

{For the TRAP data analysis presented in Section~\ref{subsec:trap}, Figure~\ref{fig:trap-pcs} presents the smoothed, finer-grain PC scores obtained by each method across the study region. We observe similar spatial patterns in the distribution of the PC scores {across different methods}, except for the south end of the study region for the third PC where PCA identifies a stronger signal than RapPCA and predictive PCA. In particular, all methods highlight regions near the airport and/or around major roads (the dark red area) for the first PC, indicating aircraft and road traffic emissions as a major source of overall pollution level. This is consistent with the large contributions of BC as well as UFP with small or moderate sizes \citep{zhang2019number, bendtsen2021review}, as reflected by the loadings of UFP (with relatively small particle sizes) and BC in Figure~\ref{fig:trap-loadings}.}

\begin{figure}
    \centering
    \includegraphics[width=14.5cm]{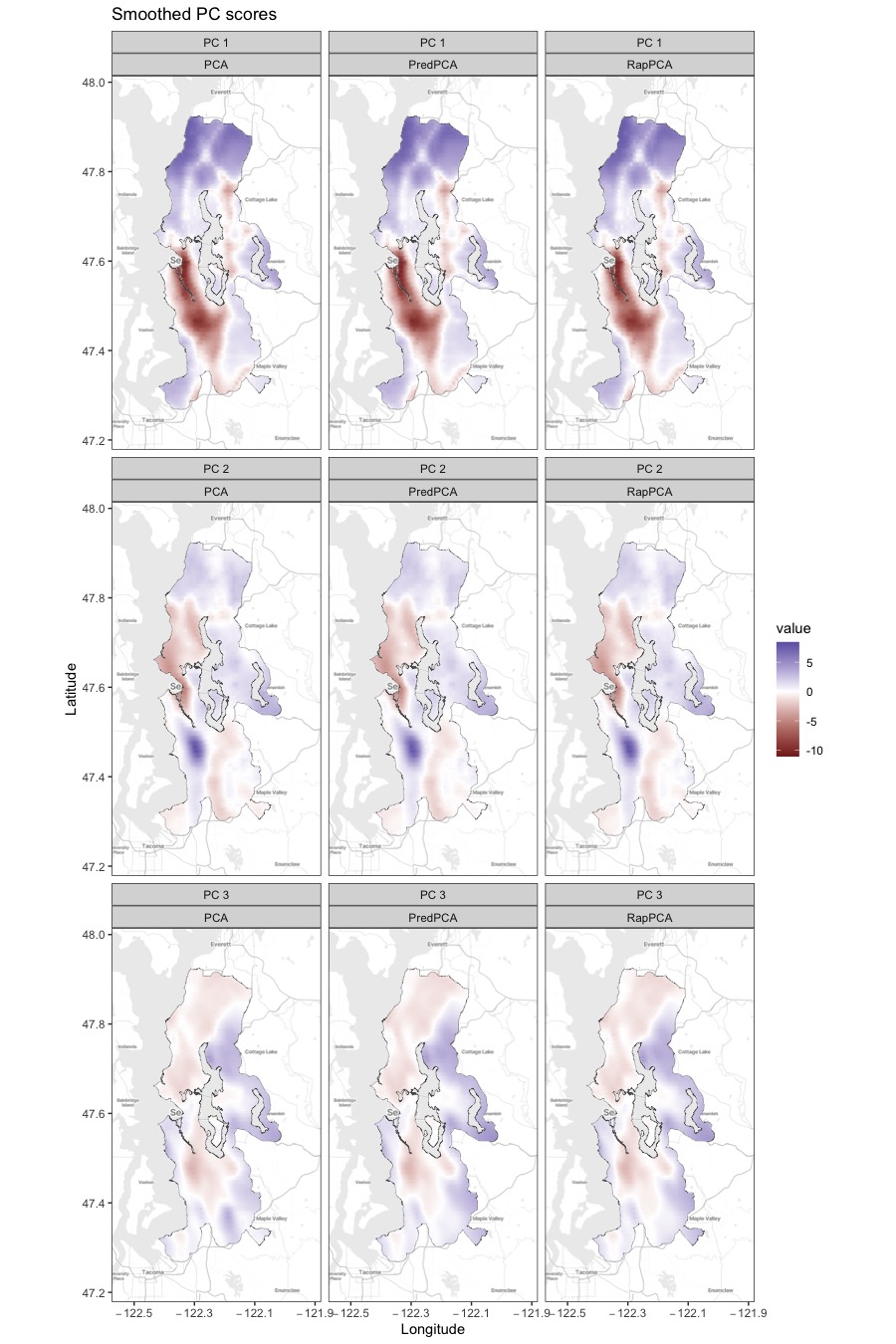}
    \caption{Smoothed PC scores of pollutant concentrations from the Seattle TRAP data, {based on spatial extrapolation following each dimension reduction algorithm.}}
    \label{fig:trap-pcs}
\end{figure}

For the spatial transcriptomics application in Section~\ref{subsec:st}, {Figure~\ref{fig:st-highres} presents the smoothed high-resolution maps of predicted PC scores, obtained by spatial prediction following each dimension reduction approach. We observe that predictive PCA produces the smoothest predicted map due to its emphasis on the predictability of PC scores; however, it could over-smooth and omit meaningful spatial variations in gene expression, as can be seen in the high-resolution map for the third PC. For spatial PCA, while the extracted low-resolution PCs also demonstrate smoothness as predictive PCA (see Figure~\ref{fig:st-pcs}), they are not guaranteed to be well predictable and the extrapolated high-resolution map appears noisier. RapPCA achieves a reasonable balance among all algorithms.}

\begin{figure}
    \centering
    \includegraphics[width=16.5cm]{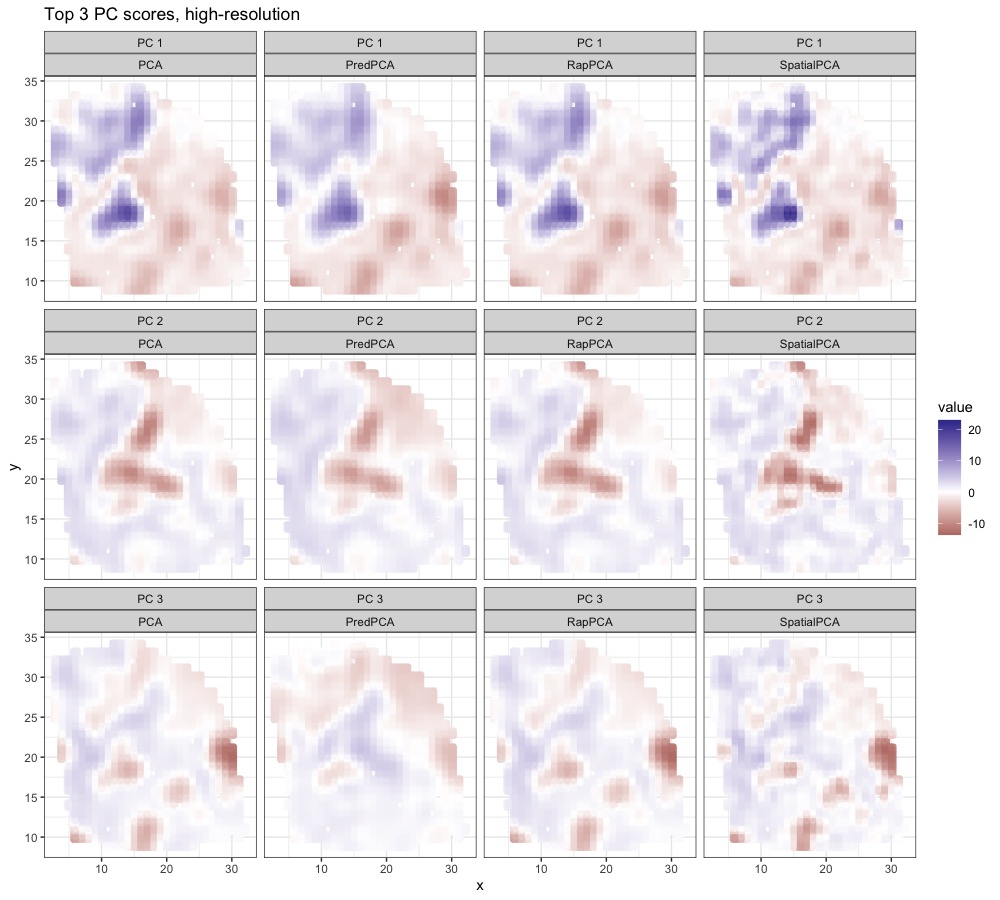}
    \caption{{Smoothed PC scores based on predicted high-resolution gene expression maps following each dimension reduction algorithm in the HER2-positive breast tumor data.}}
    \label{fig:st-highres}
\end{figure}

\end{document}